\documentclass[sigconf]{acmart}
\usepackage{enumitem}
\usepackage{amsmath}
  
\usepackage{tcolorbox}
\usepackage{amssymb}
\usepackage{bm}
\usepackage{nicefrac}
\usepackage{booktabs}
\usepackage{array}
\usepackage{multirow}
\usepackage{threeparttable}
\usepackage{makecell}
\usepackage[procnumbered,ruled,vlined,linesnumbered]{algorithm2e}
\usepackage{siunitx}
\usepackage{stfloats}
\usepackage{graphicx}
\usepackage{subfigure}
\usepackage{hyperref}
\usepackage{enumerate}
\usepackage{graphicx}
\usepackage{caption}
\usepackage{subcaption}

\newtheorem{theorem}{Theorem}[section]

\newtheorem{lemma}[theorem]{Lemma}

\newenvironment{fminipage}%
{\begin{Sbox}\begin{minipage}}%
		{\end{minipage}\end{Sbox}\fbox{\TheSbox}}

\def\calG{\mathcal{G}}

\def\calF{\mathcal{F}}
\def\calR{\mathcal{R}}
\newcommand{\removelatexerror}{\let\@latex@error\@gobble}

\newcommand\LL{\bm{\mathit{L}}}

\newcommand\zz{\boldsymbol{\mathit{z}}}

\newcommand\oomega{\boldsymbol{\mathit{\omega}}}

\newcommand\ee{\boldsymbol{\mathit{e}}}

\newcommand\sss{\boldsymbol{\mathit{s}}}

\renewcommand\AA{\boldsymbol{\mathit{A}}}

\newcommand\DD{\boldsymbol{\mathit{D}}}

\newcommand\II{\boldsymbol{\mathit{I}}}

\DontPrintSemicolon
\SetKw{KwAnd}{and}
\SetFuncSty{textsc}
\SetKwInOut{Input}{Input\ \ \ \ }
\SetKwInOut{Output}{Output}

\usepackage{tabularx}
\usepackage{stfloats}

\DontPrintSemicolon
\SetKw{KwAnd}{and}
\SetFuncSty{textsc}
\SetKwInOut{Input}{Input\ \ \ \ }
\SetKwInOut{Output}{Output}

\usepackage{tabularx}
\usepackage{stfloats}

\copyrightyear{2024}
\acmYear{2024}
\setcopyright{acmlicensed}\acmConference[WWW '24]{Proceedings of the ACM Web Conference 2024}{May 13--17, 2024}{Singapore, Singapore}
\acmBooktitle{Proceedings of the ACM Web Conference 2024 (WWW '24), May 13--17, 2024, Singapore, Singapore}
\acmDOI{10.1145/3589334.3645578}
\acmISBN{979-8-4007-0171-9/24/05}

\begin{document}
%	\fancyhead{}
	%%
	%% The "title" command has an optional parameter,
	%% allowing the author to define a "short title" to be used in page headers.
	\title{Efficient Computation for Diagonal  of  Forest Matrix via Variance-Reduced Forest Sampling}%Maximizing the Social Opinions by Creating Relationships
%\footnotemark[1]
\iffalse
\author{Haoxin Sun and Zhongzhi Zhang\footnotemark}

\email{21210240097@m.fudan.edu.cn, 
 zhangzz@fudan.edu.cn}
\fi

	\author{Haoxin Sun}
	\affiliation{%
		\institution{Fudan University}
		%\streetaddress{1 Th{\o}rv{\"a}ld Circle}
		\city{Shanghai}
		\country{China}}
	\email{21210240097@m.fudan.edu.cn}
	
	\author{Zhongzhi Zhang\footnotemark}
	\affiliation{%
		\institution{Fudan University}
		%\streetaddress{1 Th{\o}rv{\"a}ld Circle}
		\city{Shanghai}
		\country{China}}
	\email{zhangzz@fudan.edu.cn}

\begin{abstract}

The forest matrix of a graph, particularly its diagonal elements, has far-reaching implications in network science and machine learning. The state-of-the-art algorithms for the diagonal of forest matrix computation are based on the fast Laplacian solver. However, these algorithms encounter limitations when applied to digraphs due to the incapacity of the Laplacian solver. To overcome the issue, in this paper, we propose three novel sampling-based algorithms: \textsc{SCF}, \textsc{SCFV}, and \textsc{SCFV+}. Our first algorithm \textsc{SCF}  leverages a probability interpretation of the diagonal of the forest matrix and utilizes an extension of Wilson's algorithm to sample spanning converging forests. To reduce the variance in the forest sampling, we develop two novel variance-reduced techniques. The first technique, leading to the proposal of the \textsc{SCFV} algorithm, is inspired by opinion dynamics in graphs and applies matrix-vector iteration to the spanning forest sampling. While \textsc{SCFV} achieves reduced variance compared to \textsc{SCF}, the cross-product term in its variance expression can be complex and potentially large in certain graphs. Therefore, we develop another technique, leading to a new iteration equation and the \textsc{SCFV+} algorithm. \textsc{SCFV+} achieves further reduced variance without the cross-product term in the variance of \textsc{SCFV}. We prove that \textsc{SCFV+} can achieve a relative error guarantee with high probability and maintain a linear time complexity relative to the number of nodes in the graph, presenting a superior theoretical result compared to state-of-the-art algorithms. Finally, we conduct extensive experiments on various real-world networks, showing that our algorithms achieve better estimation accuracy and are more time-efficient than the state-of-the-art algorithms. Particularly, our algorithms are scalable to massive graphs with more than twenty million nodes in both undirected and directed graphs.

\end{abstract}

\begin{CCSXML}
<ccs2012>
   <concept>
       <concept_id>10003752.10003809.10003635</concept_id>
       <concept_desc>Theory of computation~Graph algorithms analysis</concept_desc>
       <concept_significance>500</concept_significance>
       </concept>
   <concept>
       <concept_id>10003033.10003068</concept_id>
       <concept_desc>Networks~Network algorithms</concept_desc>
       <concept_significance>500</concept_significance>
       </concept>
   <concept>
       <concept_id>10002951.10003227.10003351</concept_id>
       <concept_desc>Information systems~Data mining</concept_desc>
       <concept_significance>500</concept_significance>
       </concept>
 </ccs2012>
\end{CCSXML}

% \ccsdesc[500]{Theory of computation~Graph algorithms analysis}
\ccsdesc[500]{Networks~Network algorithms}
\ccsdesc[500]{Information systems~Data mining}

\keywords{Forest matrix, Wilson's algorithm, spanning converging forest, variance reduction, graph algorithm}

	%% A "teaser" image aNeNepears between the author and affiliation
	%% information and the body of the document, and typically spans the
	%% page.
%	\begin{teaserfigure}
%		\includegraphics[width=\textwidth]{sampleteaser}
%		\caption{Seattle Mariners at Spring Training, 2010.}
%		\Description{Enjoying the baseball game from the third-base
%			seats. Ichiro Suzuki preparing to bat.}
%		\label{fig:teaser}
%	\end{teaserfigure}
	
	%%
	%% This command processes the author and affiliation and title
	%% information and builds the first part of the formatted document.

\maketitle
%\cortext[mycorrespondingauthor]{Corresponding author}
%\renewcommand{\thefootnote}{\fnsymbol{footnote}}
% \renewcommand{\thefootnote}{*}

\renewcommand{\thefootnote}{*}
\footnotetext[1]{Corresponding author. Both authors are with Shanghai Key Laboratory of Intelligent Information Processing, School of Computer Science, Fudan University, Shanghai 200433, China. Zhongzhi zhang is also with Institute of Intelligent Complex Systems, Fudan University, Shanghai 200433,China.}

\section{Introduction}

 As a typical representation of a graph, the Laplacian matrix  $\LL$ encapsulates much useful structural and dynamical information of the graph~\cite{Me94}. In addition to $\LL$ itself, the forest matrix, denoted as $\mathbf{\Omega} = (\II+\LL)^{-1}$, is also a powerful tool in network science, with close ties to spanning rooted forests in graphs~\cite{ChSh06}. Recent studies have spotlighted various applications of the forest matrix and its variants in various aspects, such as  Markov processes~\cite{AvLuGaAl18,AvCaGaMe18}, opinion dynamics~\cite{GiTeTs13, XuBaZh21,SuZh23}, and graph signal processing~\cite{PiAmBaTr21,PiAmBaTr20}. Particularly, the diagonal entries of $\mathbf{\Omega}$ appear frequently in diverse applications. First, it can serve as the forest closeness centrality~\cite{JiBaZh19, GrAnPrMe21} of a network. Besides,  the diagonal of the forest matrix has been closely associated with determinantal point processes in machine learning~\cite{KuTa12}, and  has found relevance through electrical interpretations in multi-agent and network-based problems~\cite{RoFrFa17}.

% As a typical representation of a graph, the Laplacian matrix  $\LL$ encapsulates much useful structural and dynamical information of the graph~\cite{Me94}. 

In order to achieve better effects of the applications for the diagonal entries of $\mathbf{\Omega}$ for a graph with $n$ nodes, the first step is to compute or evaluate the diagonal of $\mathbf{\Omega}$. A straightforward computation of $\mathbf{\Omega}$ involves inverting  matrix $\II+\LL$, which costs $O(n^3)$ operations and $O(n^2)$ memories and thus is prohibitive for relatively large graphs. In previous work, two Laplacian solver~\cite{CoKyMiPaJaPeRaXu14} based algorithms, \textsc{JLT} and \textsc{UST}, were proposed to compute the diagonal of $\mathbf{\Omega}$~\cite{JiBaZh19, GrAnPrMe21}. Although these methods outperform the standard approach, they are constrained by the Laplacian solver's inability to handle directed graphs. Moreover, while \textsc{UST} exhibits superior performance compared to \textsc{JLT} in experiments~\cite{GrAnPrMe21}, it provides only an absolute error guarantee theoretical analysis, whereas  \textsc{JLT} offers a relative error guarantee~\cite{JiBaZh19}. Consequently, a theoretically guaranteed estimation algorithm for approximating diagonal of $\mathbf{\Omega}$ for both undirected and directed graphs is imperative.

In this paper, we delve deep into the problem of efficiently computing the diagonal of the forest matrix in a digraph  with $n$ nodes, in order to overcome the challenges and limitations of existing algorithms. The main contributions of this work are summarized as follows:

(i) We introduce two forest interpretations of the diagonal of the forest matrix, from the perspectives of average tree size and rooted probability, respectively. Based on the probability interpretation, we develop \textsc{SCF}, an algorithm that leverages an extension of Wilson's algorithm to approximate the diagonal of the forest matrix, offering a time complexity of $O(ln)$, where $l$ represents the sampling number.

(ii) To reduce the variance in forest sampling, we develop two novel variance-reduced techniques and  propose two algorithms \textsc{SCFV} and \textsc{SCFV+}. We prove that the variance of the estimators in the three algorithms diminishes progressively. Notably, \textsc{SCFV+} not only achieves a relative error guarantee with high probability but also maintains a linear time complexity relative to the number of nodes in the graph, thereby presenting a superior theoretical result compared to existing algorithms.

(iii) Extensive experiments on various undirected and directed  networks demonstrate that  our algorithms achieve better estimation accuracy and enhance computation efficiency, compared to the state-of-the-art algorithms. Moreover, our algorithms are scalable to massive graphs with over thirty million nodes.

\section{Related Work}
% In this section, we briefly review the existing work related to ours.

Identifying crucial nodes in a graph is a fundamental issue with a rich history in machine learning and graph analysis~\cite{LiPeShYiZh19,SkRaMiYo19,GrGrLa18,SkMiRa18,Ne10}. 
  Various metrics and indices have been developed to quantify the relative importance or centrality of nodes within a network~\cite{BaZh22},  including but not limited to,  betweenness centrality~\cite{Fr77}, closeness centrality~\cite{Ba48,Ba50}, eccentricity~\cite{ChSchWe96}. In addition to the classic centrality measures, forest closeness centrality has been proposed and explored for its unique advantages~\cite{JiBaZh19}. One of the notable advantages of forest closeness centrality is its applicability to disconnected networks, which is particularly relevant for various real-world networks such as Mobile Ad hoc Networks~\cite{DaHa07} and protein-protein interaction networks~\cite{HuLiWu18}. Furthermore,  the forest centrality has a better discriminating power than alternate metrics such as betweenness, harmonic centrality, eigenvector centrality, and PageRank~\cite{BaZh22}. 

The calculation of forest closeness centrality is inherently tied to the diagonal elements of the forest matrix  $\mathbf{\Omega}$. To speed up the computation, a nearly linear time algorithm \textsc{JLT} was proposed in~\cite{JiBaZh19}, which combines the Johnson-Lindenstrauss lemma \cite{JoLi84,Ac03} with the  fast Laplacian solver~\cite{CoKyMiPaJaPeRaXu14}, necessitating a time complexity of $O(m\epsilon^{-2}\log^{2.5}n\log\frac{1}{\epsilon}{\rm polyloglog}(n))$ to achieve a relative error bound.  Subsequently, the \textsc{UST} algorithm was proposed in~\cite{GrAnPrMe21}, combining a single instance of the Laplacian solver and uniform spanning tree sampling. Compared to \textsc{JLT}, \textsc{UST} achieves computational acceleration  in experiments and needs a total time complexity of \(\widetilde{O}(m\epsilon^{-2}\log^{3/2}n)\) to guarantee an absolute error of \(\epsilon\) with high probability. However, both algorithms utilize the fast Laplacian solver~\cite{CoKyMiPaJaPeRaXu14}, which is not applicable to digraphs.

% The forest matrix is closed related to the spanning rooted forest in graphs~\cite{ChSh06,ChSh97}.
Wilson's algorithm plays a pivotal role in our three algorithms \textsc{SCF}, \textsc{SCFV}, and \textsc{SCFV+}. Initially proposed to sample a spanning tree in graphs~\cite{Wi96}, Wilson's algorithm and its variants have found applications across various domains, such as computing the PageRank vector~\cite{LiLiDaChQiWa23PageRank,LiLiDaWa22}, solving linear systems in graph signal processing~\cite{PiAmBaTr21,PiAmBaTr20}, addressing optimization problems in opinion dynamics~\cite{SuZh23}, and estimating effective resistance~\cite{GrAnPrMe21,LiLiDaChQiWa23Resistance}. While several variance reduction techniques have been proposed and utilized in various sampling-based problems~\cite{LiLiDaChQiWa23PageRank,PiAmBaTr22,PiAmBaTr22trace}, these techniques are either unsuitable for our diagonal estimation problem or induce a prohibitively high complexity. Thus, developing  novel estimators with reduced variance for the diagonal of the forest matrix, which are applicable to both undirected graphs and digraphs, is of significant importance and  becomes the primary research subject of this paper.

\section{Preliminaries}

 % In this section, we introduce some useful notations and tools for the convenience of description and analysis.
% \subsection{Notations}

% We use normal lowercase letters like $ a,b,c $ to denote scalars in set of real numbers, normal uppercase letters like $ A,B,C $ to denote sets, bold lowercase letters like $ \aaa,  \bb, \cc$ to denote column vectors, and bold uppercase letters like $ \AA,\BB,\CC $ to denote matrices.  We use $\AA^{\top}$ and  $\aaa^{\top}$ to represent the transpose of $\AA$ and  $\aaa$, respectively. Let $ \ee_i $  denote the column vector of appropriate dimension, where the $ i $-th element is $ 1 $, and other elements are $ 0 $. Let $ \mathbf{0} $ be an appropriate-dimension column vector with all entries being zeros, and let $\mathbf{1}$ be an appropriate-dimension column vector with all entries being ones. Let $ \II $ denote an appropriate-dimension identity matrix.  For a matrix $\AA$, $\AA_{H,F} $ denotes the submatrix of $\AA $ with row indices in set $H$ and column indices in set $ F$, and $ A_{-H} $ denotes the submatrix of $ \AA $ obtained from $ \AA $ by deleting rows and columns corresponding to nodes in set $ H $. Similarly, for a vector $ \aaa $, we use $ \aaa_{-H} $ to denote the vector obtained from $ \aaa $ by deleting elements in set $H$. If $ H $ contains only a single element $i$, we use $ \AA_{-i}$ and $\aaa_{-i}$ to denote, respectively, $ \AA_{-\{i\}} $ and $ \aaa_{-\{i\}} $ for simplicity.

\subsection{ Graph and  Laplacian Matrix}
Consider  an unweighted simple directed graph (digraph) $\calG=(V,E)$  with $n=|V|$ nodes (vertices) and $m=|E|$ directed edges (arcs), where \(V=\{v_1,v_2,\ldots,v_n\}\) represents the node set and  \(E=\{(v_i, v_j)\mid v_i, v_j \in V\}\) signifies the set of directed edges. An arc \( (v_i,v_j) \in E \) indicates a directed edge pointing from node \(v_i\) to node \(v_j\).  In what follows, $v_i$ and $i$ are used interchangeably to represent node $v_i$ if incurring no confusion. Let $N(i) $ be the node set accessible from node $ i $. In other words, $N(i) =\{ j: (i,j)\in E\}$.  A digraph is called weakly connected if it is connected when one replaces any directed edge $(i,j)$ with two directed edges $(i,j)$ and $(j,i)$ in opposite directions. 
%, and \(E\) signifies the set of directed edges such that \(E=\{(v_i, v_j)\mid v_i, v_j \in V\}\). 

The structure information of digraph $\calG=(V,E)$  is characterized by  its adjacency matrix $\AA=(a_{ij})_{n\times n}$, where $a_{ij} = 1$  if $ (v_i,v_j) \in E $ and $a_{ij} = 0 $ otherwise. For any node \(i\) in \(\calG\), its in-degree \(d^+_i\) and out-degree \(d^-_i\) are given by \(d^+_i=\sum_{j=1}^n a_{ji}\) and \(d^-_i=\sum_{j=1}^n a_{ij}\), respectively. In the sequel, we use $d_i$ to represent the out-degree $d_i^-$. The diagonal out-degree matrix of digraph $\calG$ is defined as ${\DD} = {\rm diag}(d_1, d_2, \ldots, d_n)$, and the Laplacian matrix of digraph $\calG$ is defined to be ${\LL}={\DD}-{\AA}$. Let $\II$ be the $n$-dimensional identity matrix, and $\ee_i$ be the  $i$-th standard basis column vector, with $i$-th element being 1 and other elements being 0.

% A path $P$ from node $v_1 $ to $ v_k $ is an alternating sequence of nodes and arcs $v_1$,$(v_1,v_2)$,$v_2$,$\cdots$, $v_{j-1},(v_{j-1}$,$v_j)$, $v_j$ in which nodes are distinct and every arc $ (v_i,v_{i+1}) $ is from $ v_i $ to $ v_{i+1}$. A loop  is a path plus an arc from the ending node to the starting node. A digraph is (strongly) connected if for any pair nodes $v_x$ and $v_y$, there is a path from $v_x$ to $v_y$, and there is a path from $v_y$ to $v_x$ at the same time. A digraph is called weakly connected if it is connected when one replaces any directed edge $(i,j)$ with two directed edges $(i,j)$ and $(j,i)$ in opposite directions. A tree is a weakly connected graph with no loops. An isolated node is considered as a tree. A forest is a particular graph that is a disjoint union of trees.

% In a digraph $\calG$, if for any arc $(i,j)$, the arc $(j,i)$ exists, $\calG$ is reduced to an undirected graph. When $\calG $ is undirected, $a_{ij}= a_{ji}$ holds for an arbitrary pair of nodes $i$ and $j$, and thus $d^+_i= d^-_i$ holds for any  node $i\in V$. Moreover, in undirected graph $\calG $ both adjacency matrix $\AA$ and Laplacian matrix $\LL$ of  $\calG $ are symmetric, satisfying  $\LL\mathbf{1}=\mathbf{0}$.
%\subsection{Spanning Converging Forests and Forest Matrix}

\subsection{Spanning Forests and Forest Matrix}

For a digraph $\calG=(V,E)$, a spanning subgraph of $\calG$  retains all nodes from \(V\) but may only include a subset of edges from \(E\). A rooted converging tree is a weakly connected digraph, where one node, called the root node, has an out-degree of 0, and all other nodes have an out-degree of 1. An isolated node is considered as a converging tree with the root being itself. A spanning converging forest of digraph $\calG$ is a spanning subgraph of $\calG$, whose weakly connected components are rooted converging trees. Such a forest aligns with the concept of an in-forest in~\cite{AgCh01,ChAg02}.

The forest matrix~\cite{ChSh97,ChSh98} is defined as $\mathbf{\Omega}=\left(\II+\LL\right)^{-1}=(\omega_{ij})_{n \times n}$. In the context of digraphs, the forest matrix $\Omega$ is row stochastic, with all its components in the interval $[0,1]$. Moreover,  for each column, the diagonal elements surpass the other elements, that is  $ 0\leq\omega_{ji}< \omega_{ii}\leq 1$ for any pair of nodes $ i$ and $j $, and the diagonal element $\omega_{ii}$ of matrix $\mathbf{\Omega}$ satisfies $\frac{1}{1+d_i}\leq \omega_{ii} \leq \frac{2}{2+d_i}$~\cite{SuZh23}. In the sequel,  we use  $\oomega$ to denote the column vector of all diagonal elements of the forest matrix, that is $\oomega = (\omega_{11},\cdots,\omega_{nn})^{\top}$.

\section{Fast Forest Sampling Algorithm}

 In this section, we introduce two interpretations of the forest matrix diagonal. Utilizing the probability interpretation, along with an extension of Wilson's algorithm, we propose a fast sampling algorithm to calculate the diagonal element vector $\oomega$ of the forest matrix.

% \section{Properties of Forest Matrix}
% In this section, we provide some properties of the forest matrix $\Omega$, and give a new interpretation of the reciprocal of the diagonal elements of the forest matrix from the perspective of the number of nodes in each rooted spanning tree. 

\subsection{Novel Forest Interpretation for Diagonal  of Forest Matrix}
In this subsection, we introduce a novel interpretation for the diagonal of forest matrix $\mathbf{\Omega}$. Before proceeding, we introduce some essential notations.

For an unweighted digraph $\calG=(V,E)$, let $\calF $ denote the set of all spanning converging forests. For a given spanning converging forest $\phi\in\calF$,  define the root set $\mathcal{R}(\phi )$ of $\phi$ as the collection of roots from all converging trees that constitute $ \phi$, that is, $\mathcal{R}(\phi ) = \{i:(i,j) \notin \phi, \forall j \in V_{\phi} \}$. Since each node $i$ in $\phi$  belongs to a specific converging tree, we define a function $r_{\phi}(i): V \rightarrow \calR(\phi) $ mapping node $i$ to the root of its associated converging tree. Thus, if $r_{\phi}(i) = j$, it implies that $j$ is in $\calR(\phi)$, and both nodes $i$ and $j$ are in the same converging tree  in \(\phi\). Define $ \calF_{ij} $ as the set of spanning converging forests in which nodes $i$ and $j$ are within the same converging tree, rooted at node $j$. Formally, $\calF_{ij} = \{\phi: r_{\phi}(i) = j, \phi \in \calF\}$.  It follows that $\calF_{ii} = \{\phi: i\in \calR(\phi), \phi \in \calF\}$. For example,  the left-hand side of Figure~\ref{ftoy} is a toy digraph consisting of  5 nodes and 8 edges, while the right-hand side illustrates one of its spanning converging forests with roots marked in blue. Let $\phi$ represent the spanning converging forest shown in Figure~\ref{ftoy}. With the above notations, we have $\calR(\phi) = \{3,5\}$, and $r_{\phi}(1) = 3 $.

\begin{figure}[htbp!]
	\centering
	\includegraphics[width=.5\columnwidth]{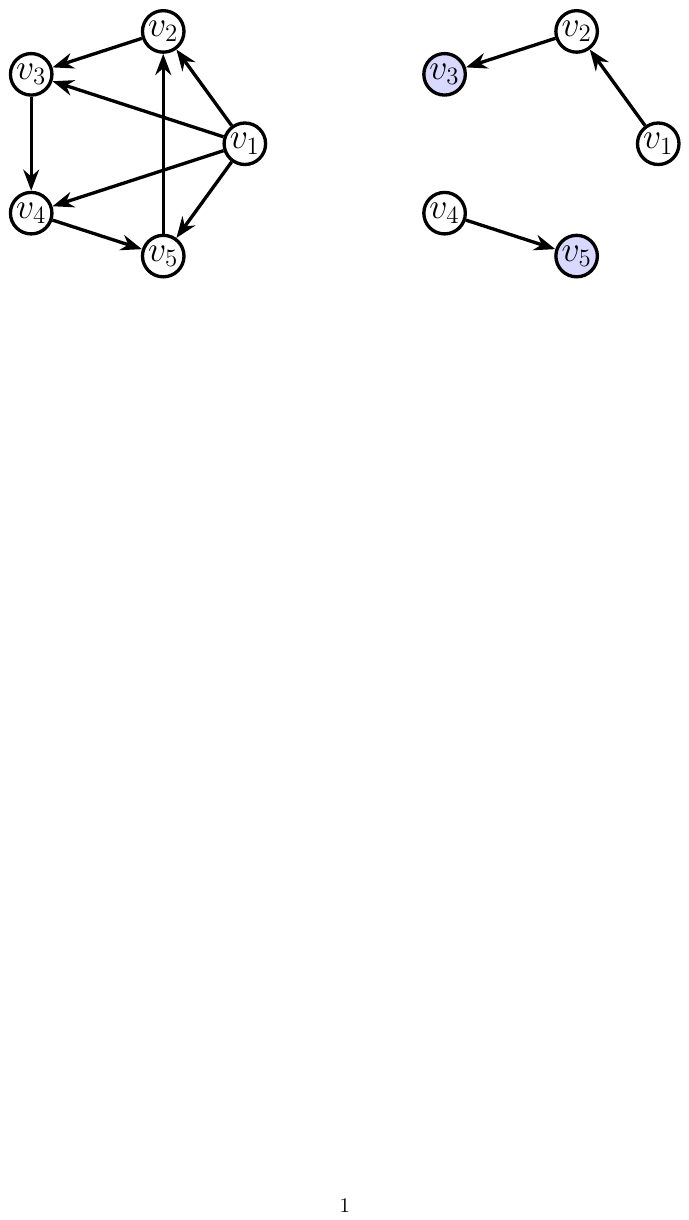}
	\caption{A toy digraph  and one of its spanning converging forest $\phi$.}\label{ftoy}	
\end{figure}

For a node $i \in V$ and a  spanning converging forest $\phi \in \calF_{ii} $, let $N(\phi,i) $ be a set defined by $N(\phi,i) = \{ j: r_\phi(j) = i \}$. By definition, for any $\phi \in \calF_{ii}  $, $ |N(\phi,i)| $ is equal to the number of nodes in the converging tree in $\phi$, whose root is node $i$. For two nodes $i$ and $j$ and a  spanning converging forest $\phi$, define $\mathbb{I}_{\{r_{\phi}(i)=j\}}$ as an indicator function, which is 1 if the input statement is true and 0 otherwise. For example, if $r_{\phi}(i)=j$, $\mathbb{I}_{\{r_{\phi}(i)=j\}}=1$, and $ \mathbb{I}_{\{r_{\phi}(i)=j\}}=0$ otherwise.

With the above defined notations, we present an interpretation of the diagonal of the forest matrix $\mathbf{\Omega}$ in undirected graphs.  The following theorem establishes the relationship between the reciprocal of $\omega_{ii}$ and the average number of nodes rooted at node $i$.

\begin{theorem}\label{th-reciprocal}
    For an undirected graph $G=(V,E)$, the reciprocal of the $i$-th diagonal elements of forest matrix $\mathbf{\Omega}$ is equal to the average size of the tree containing node $i$ across all converging spanning forest where $i$ serves as one of the root nodes. Formally, this can be expressed as $\frac{1}{\omega_{ii}} = \frac{\sum_{\phi\in\calF_{ii}}N(\phi,i)}{|\calF_{ii}|}.$
\end{theorem}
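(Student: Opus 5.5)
The plan is to use the matrix-forest theorem of Chebotarev and Shamis. For an unweighted digraph, $\det(\II+\LL)=|\calF|$, and the $(i,j)$ entry of the adjugate of $\II+\LL$ counts spanning converging forests in which $i$ and $j$ lie in the same tree rooted at $j$. Hence
\[
\omega_{ij}=\frac{|\calF_{ij}|}{|\calF|}, \qquad \omega_{ii}=\frac{|\calF_{ii}|}{|\calF|}.
\]
Since $\omega_{ii}=|\calF_{ii}|/|\calF|$, the claim $\frac{1}{\omega_{ii}}=\frac{\sum_{\phi\in\calF_{ii}}|N(\phi,i)|}{|\calF_{ii}|}$ is equivalent to the combinatorial identity
\[
|\calF| \;=\; \sum_{\phi\in\calF_{ii}} |N(\phi,i)| \;=\; \sum_{j\in V} |\calF_{ji}| .
\]
The second equality is just a double count of pairs $(\phi,j)$ with $r_\phi(j)=i$. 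In matrix form, this says $\sum_j \omega_{ji}=1$, i.e. the $i$-th column sum of $\mathbf{\Omega}$ equals one.

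For an undirected graph, $\LL$ is symmetric, so $\mathbf{\Omega}$ is symmetric. The preliminaries state that $\mathbf{\Omega}$ is row stochastic, since $(\II+\LL)\mathbf{1}=\mathbf{1}$. Symmetry then gives $\sum_j\omega_{ji}=\sum_j\omega_{ij}=1$. Substituting $\omega_{ji}=|\calF_{ji}|/|\calF|$ yields $\sum_j|\calF_{ji}|=|\calF|$. Dividing by $|\calF_{ii}|$ and using $\omega_{ii}=|\calF_{ii}|/|\calF|$ completes the proof.

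The main point needing care is the matrix-forest theorem with the correct orientation for converging forests when $\LL$ uses out-degrees. I would cite it from Chebotarev--Shamis (or from \cite{SuZh23}), as the paper uses that interpretation throughout. If citing is not allowed, I would sketch it via the all-minors matrix-tree theorem applied to $\LL$ with an added auxiliary root vertex joined to every node by an arc.

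Alternatively, the identity $|\calF|=\sum_j|\calF_{ji}|$ has a bijective proof in the undirected case. Every forest $\psi\in\calF$ has a unique tree containing $i$; re-rooting that tree at $i$ gives a forest in $\calF_{ii}$ together with a marked node $j$ of that tree, namely the old root. The map $\psi\mapsto(\phi,j)$ is reversible by re-rooting at $j$. This needs undirectedness, because re-rooting reverses edges. The bijection could be given as a combinatorial check, but the linear-algebra route is shorter.
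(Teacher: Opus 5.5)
Your proposal is correct and is essentially the paper's own proof: like the appendix, you combine $\sum_j \omega_{ji}=1$ (symmetry plus row-stochasticity of $\mathbf{\Omega}$ in the undirected case), the forest formula $\omega_{ji}=|\mathcal{F}_{ji}|/|\mathcal{F}|$, and a double count that exchanges $\sum_j$ with $\sum_{\phi\in\mathcal{F}_{ii}}$. Your re-rooting bijection is also valid. It proves $|\mathcal{F}|=\sum_j|\mathcal{F}_{ji}|$ combinatorially without the column-sum argument, though it still needs the matrix-forest theorem for $\omega_{ii}=|\mathcal{F}_{ii}|/|\mathcal{F}|$.
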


% In the following subsection, the discussion is focused on the scenario where $G$ is undirected. Specifically, for any nodes $i,j\in V$, if $(i,j)\in E$, it follows that $(j,i)\in E$. Consequently, the forest matrix $\Omega$ is symmetric and double stochastic. It is important to note that the notations defined in the preliminary section remain applicable, as undirected graphs are merely a specific subset of directed graphs.

It has been shown~\cite{JiBaZh19} that  the $i$-th diagonal element $\omega_{ii}$ of the forest matrix $\mathbf{\Omega}$ is consistent with the forest distance from node $i$ to other nodes, with a smaller sum of distances indicating a more pivotal node. In Theorem~\ref{th-reciprocal}, we introduce a novel interpretation for $\omega_{ii}$, suggesting that if the average tree size rooted at $i$ is larger, then $\frac{1}{\omega_{ii}}$ will be larger, $\omega_{ii}$ will be smaller, and consequently, node $i$ will be more significant. This aligns with the analysis in~\cite{JiBaZh19}. However, both the distance interpretation and the average tree size interpretation are valid only for undirected graphs, as their derivations utilize the symmetry of the forest matrix, a property exclusive to undirected graphs. Below we propose another interpretation from a probabilistic perspective, which is applicable to both undirected graphs and digraphs and inspires us  for the design of our sampling algorithms.

\subsection{ Probability Interpretation and Extension of Wilson's Algorithm}

The entries of the forest matrix are closely related to the spanning converging forest in graphs. Using the approach in~\cite{Ch82,ChSh06,ChSh97},  the entry $\omega_{ij}$ of the forest matrix $\mathbf{\Omega}$ can be expressed as $\omega_{ij}= |\calF_{ij}|/|\calF|$.  By setting $i=j$, we have $\omega_{ii} = {|\calF_{ii}|}/{|\calF|}$ for every node $i\in V$, which leads to  a probabilistic interpretation of the diagonal entry $\omega_{ii}$ of the forest matrix. Specifically,  $\omega_{ii}$ represents the probability that node $i$ is included in the root set $\calR(\phi)$, when a spanning converging forest $\phi\in\calF$ is sampled uniformly.  Then for a spanning converging forest $\phi\in\calF$, we can define an estimator $\widehat{\omega}_{ii}(\phi)$ of $\omega_{ii}$ as $\widehat{\omega}_{ii}(\phi)  =\mathbb{I}_{\{i\in\calR(\phi)\}}  $. The estimator $\widehat{\omega}_{ii}$ is unbiased since $\mathbb{E}\{\widehat{\omega}_{ii}(\phi) \}= \mathbb{P}\{i\in\calR(\phi) \} = \frac{|\calF_{ii}|}{|\calF|} = \omega_{ii}.$

Therefore, if we  uniformly generate a spanning converging forest in $\calG$, and record the probability of $i$ serving as a root node, we can estimate $\omega_{ii}$.  In the following, we give a brief introduction of an extension of Wilson's Algorithm in order to uniformly sample spanning converging forest $\phi \in \calF$.
\begin{figure}[hbp!]
	\centering
	\includegraphics[width=.6\columnwidth]{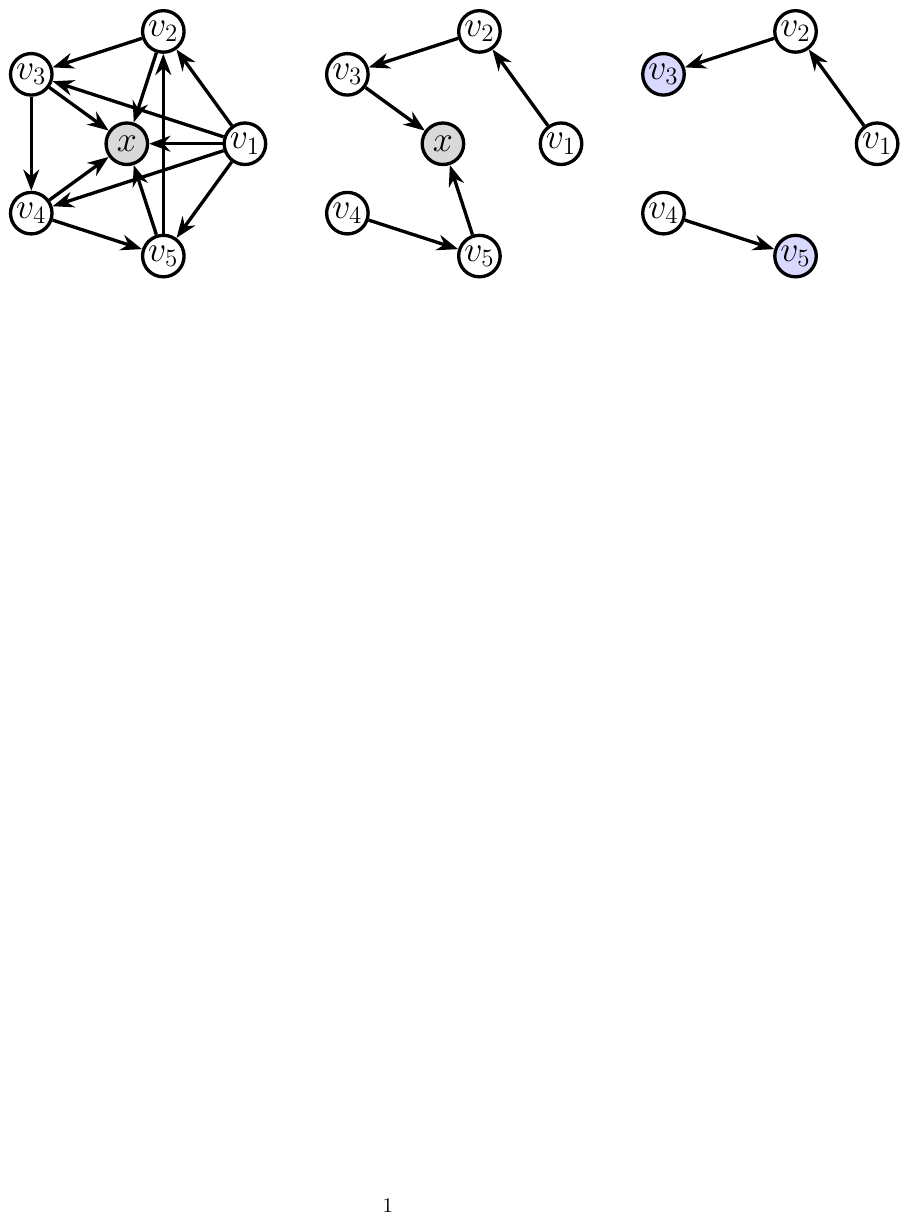}
	\caption{Illustration for a spanning converging forest generated using the extension of Wilson's algorithm for the toy graph in Figure~\ref{ftoy}.}\label{ftoyrf}	
\end{figure}

Wilson proposed an algorithm  based on a loop-erased random walk to get a  spanning tree rooted at a given node~\cite{Wi96}.  The loop-erasure technique, pivotal to this algorithm,  is a process derived from the random walk  by performing an erasure operation on its loops in chronological order~\cite{LaFr79,La80}. For a digraph $\calG= (V,E)$,  we can also apply an extension of Wilson's Algorithm to get a spanning converging forest $\phi \in \calF$, by using the method similar to that in~\cite{AvLuGaAl18,PiAmBaTr21,SuZh23}, which includes the following  steps. Firstly, construct an augmented digraph $\calG' $ by adding one new node $x$. Then for each node $i$ in the original graph, add new edge $(i,x)$ to the augmented graph $\calG' $. Subsequently, use Wilson's algorithm to generate a rooted spanning tree  in the augmented graph $\calG'$, designating $x$ as the root node. Finally, delete node $x$  and all edges connected to it in the rooted spanning tree, and define the root set $\calR$ as the nodes with an out-degree of 0, thereby obtaining a spanning converging forest in $\calG$. For example, in Figure~\ref{ftoyrf}, the left-hand side is the augmented toy graph, with one new node $x$ and some new edges added. The middle of Figure~\ref{ftoyrf} illustrates a rooted spanning tree in the augmented graph, rooted at $x$. The right-hand side is a spanning converging forest in the original toy graph,  obtained by deleting node $x$ and its connected edges.

Since Wilson's algorithm returns a uniform rooted spanning tree~\cite{Wi96}, the spanning converging forest obtained using the above steps is also uniformly selected from $\calF$.

\subsection{Fast Sampling Algorithm}
In this subsection, we propose a fast sampling algorithm to estimate the diagonal of the forest matrix based on the above-mentioned extended Wilson's algorithm. Additionally, we provide a theoretical analysis concerning its time complexity and relative error guarantee.

In the above, we have defined an unbiased estimator $\widehat{\omega}_{ii}(\phi)$, and introduced how to employ the extension of Wilson's algorithm to uniformly generate a spanning converging forest  $\phi$ for $\calG$.  Then  we can generate $l$ spanning forest $\phi_{1},\cdots,\phi_l$, and use the average value $\frac{1}{l}\sum_{j=1}^l\widehat{\omega}_{ii}(\phi_j)$ to approximate $\omega_{ii}$. We detail this in  Algorithm~\ref{alg-rf}.

\begin{algorithm}[htbp!]
	\caption{$\textsc{SCF}(\calG,l)$}
	\label{alg-rf}
	\Input{ $\calG$:a digraph  \\ $l$:number of generated spanning converging forest
	}
	\Output{$\widehat{\boldsymbol{\omega}}$ : a vector approximating the diagonal elements of the forest matrix}
 \textbf{Initialize} :
	$\widehat{\boldsymbol{\omega}}$[$i$] $ \leftarrow 0 $, $i= 1,2,\ldots,n $\\
 \For{$t = 1$ to $l$}{
 $\phi_t$ $\leftarrow$ a spanning converging forest generated from $\calG$\\
 \For{$ i = 1 $ to $ n $}{
 $j \leftarrow r_{\phi_t}(i)$\\
                 \If{$i = j$}{
                $\widehat{\boldsymbol{\omega}}$[$i$]$\leftarrow$ $\widehat{\boldsymbol{\omega}}$[$i$] $ + \frac{1}{l}$\\

                }
 }
% InForest[$ i $] $ \leftarrow $ false , $i= 1,2,\ldots,n $\;
% 	Next[$ i $] $ \leftarrow -1$ , $i= 1,2,\ldots,n $\;
%         RootIndex[$i$] $ \leftarrow 0 $, $i= 1,2,\ldots,n $\\
% 	\For{$ i = 1 $ to $ n $}
% 	{$ u \leftarrow i $\; 
% 		\While{not InForest[$ u $]}{
% 			seed $ \leftarrow $ \textsc{Rand}()  \; 
% 			\If{seed  $\leq  \frac{1}{1+d_u} $}{
% 				InForest[$ u $] $ \leftarrow $ true\;
% 				Next[$ u $]$ \leftarrow -1 $\;
%     RootIndex[$ u $]$ \leftarrow u $\;
% }
% 			\Else{
% 				Next[$ u $] $ \leftarrow $ \textsc{RandomSuccessor}($ u,\calG $)\; 
% 				u $ \leftarrow $ Next[$ u $]\;
% 			}
% 		}
%             RootNow $\leftarrow  $ RootIndex[$u$]\;
%             $ u \leftarrow i $\\
% 		\While{not InForest[$ u $]}{
% 			InForest[$ u $] $ \leftarrow $ true\;
%                 \If{$u \in N(\rm{Rootnow})$}{
%                 $\widebar{\boldsymbol{\omega}}$[$u$]$\leftarrow$ $\widebar{\boldsymbol{\omega}}$[$u$] $ + \frac{1}{l(1+d_u)}$\\

%                 }
                
% 			u $ \leftarrow $ Next[$ u $]\;
% 		}
% 	}
	 }
\textbf{return} $\widehat{\boldsymbol{\omega}}$ \;
\end{algorithm}

We now analyze the time complexity of Algorithm~\ref{alg-rf}.

% Before doing this, we present some properties of the diagonal element $\omega_{ii}$ of matrix $\mathbf{\Omega}$ for all nodes $i \in V$.
% \begin{lemma}\cite{SuZh23}\label{le-omega}
% 	For any $i=1,2,\ldots, n$, the diagonal element $\omega_{ii}$ of matrix $\mathbf{\Omega}$ sastisfies $\frac{1}{1+d_i}\leq \omega_{ii} \leq \frac{2}{2+d_i}$.
% \end{lemma}	

 % \begin{proof}
	% Let $l_{ij} $ be the entry at row $ i $ and column $ j $ of the Laplacian matrix $\LL $. Then $l_{ii} = d_i $, and $ l_{ij} = 0 $ or $ -1 $ for $i \neq j$. Using $\mathbf{\Omega}\left(\II+\LL\right)  = \II$,  one obtains that for any $i=1,2,\ldots, n$,
	% \begin{equation*}\label{key}
	% 1 = (1+d_i)\omega_{ii}+\sum_{u \neq i} \omega_{iu}l_{ui}\leq (1+d_i)\omega_{ii},
	% \end{equation*}
	% which implies that $ \omega_{ii}\geq \frac{1}{1+d_i} $ with  the equality holding  under certain conditions, for example,    $(u,i) \notin E $ for arbitrary $ u \in V $.
	% On the other hand, considering the fact that $ \sum_{j = 1}^n \omega_{ij} =1 $, one has 
	% \begin{equation*}
	% 1 =(1+d_i)\omega_{ii}+\sum_{u \neq i} \omega_{iu}l_{ui}\geq (1+d_i)w_{ii}-(1-w_{ii}),
	% \end{equation*}
	% which means  $\omega_{ii} \leq \frac{2}{2+d_i}$. The equality holds  under some conditions, for example,  $ (u,i) \in E $ for any $u \in V$.
 % \end{proof}	

\begin{lemma}\label{le-rf}
	For any unweighted digraph $ \calG = (V,E) $, the expected time complexity of Algorithm~\ref{alg-rf} is $ O(ln) $.
\end{lemma}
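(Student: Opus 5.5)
The plan is to bound the expected cost of a single call to the extended Wilson's algorithm by $O(n)$, and then multiply by the $l$ iterations. Each iteration's inner loop over $i$ costs only $O(n)$, since $r_{\phi_t}(i)$ can be recorded during forest generation. The standard fact is that the expected running time of Wilson's algorithm equals the expected total number of random-walk steps, and that this equals the mean hitting time summed appropriately. Concretely, Wilson~\cite{Wi96} shows the expected number of steps is $\sum_{i} \tau_i$, where $\tau_i$ is the expected number of visits to $i$ before absorption; equivalently it is the trace of the fundamental matrix of the walk killed at the root. We apply this to the augmented graph $\calG'$ with root $x$.

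First, describe the walk on $\calG'$. From node $u\in V$, it moves to $x$ with probability $\frac{1}{1+d_u}$ and to each out-neighbour with probability $\frac{1}{1+d_u}$. The transition matrix restricted to $V$ is $\mathbf{P}=(\II+\DD)^{-1}\AA$. The fundamental matrix is then
\[
(\II-\mathbf{P})^{-1}=\bigl((\II+\DD)^{-1}(\II+\LL)\bigr)^{-1}=\mathbf{\Omega}(\II+\DD).
\]
Hence the expected number of visits to $i$ starting from $i$ before hitting $x$ is $\omega_{ii}(1+d_i)$. By Wilson's result, the expected number of steps in one run is $\sum_{i=1}^n \omega_{ii}(1+d_i)$.

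Second, apply the stated bound $\omega_{ii}\le \frac{2}{2+d_i}$ from the preliminaries. This gives $\omega_{ii}(1+d_i)\le \frac{2(1+d_i)}{2+d_i}<2$, so one forest costs at most $2n$ expected walk steps. Each walk step takes $O(1)$ time: sampling a uniform out-neighbour or the extra edge to $x$ is constant time given adjacency lists, and loop erasure and marking are amortized into the steps.

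Third, add the $O(n)$ bookkeeping per iteration, including propagating root labels $r_{\phi}(i)$ along the new branch when it is attached. Summing over $l$ iterations yields $O(ln)$.

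The main obstacle is justifying the "expected steps $=\operatorname{tr}$ of the fundamental matrix" identity for digraphs. I would cite Wilson's theorem, which holds for general Markov chains with an absorbing root. I would argue it directly by noting that each node $i$'s contribution, namely its visits before joining the tree, has the same law as visits of the killed walk from $i$ (the cycle-popping argument); that is what yields the diagonal entries of $(\II-\mathbf{P})^{-1}$.
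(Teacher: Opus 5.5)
Your proposal is correct and follows essentially the same route as the paper, which cites Wilson and Marchal for the fact that one run of the extended Wilson's algorithm on $\calG'$ costs $\operatorname{tr}\bigl(\mathbf{\Omega}(\II+\DD)\bigr)=\sum_i \omega_{ii}(1+d_i)$ in expectation, and then (via \cite{SuZh23}) bounds this by $O(n)$ using $\omega_{ii}\le \frac{2}{2+d_i}$, exactly as you do. Your remarks on the $O(1)$ cost per walk step, the $O(n)$ per-forest bookkeeping, and the cycle-popping justification of the trace identity fill in details that the paper leaves to citations.
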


% \begin{proof}
% 	Wilson showed that the expected running time of generating a uniform spanning tree of a connected digraph $\calG$ rooted at node $u$ is equal to a weighted average of commute times between the root and the other nodes~\cite{Wi96}. Marchal rewrote this average of commute times in terms of graph matrices in Proposition 1 in~\cite{Ma00}. According to Marchal's result, the expected running time for generating one converging spanning forest is equal to the trace $\sum_{i=1}^n \omega_{ii}(1+d_i)$ of matrix $\mathbf{\Omega}(\II+\DD)$. Using Lemma~\ref{le-rf}, we have
% 	$\sum_{i=1}^n \omega_{ii}(1+d_i) \leq \sum_{i=1}^n \frac{2+2d_i}{2+d_i} \leq 2n\left(1-\frac{1}{n+1}\right).$ Thus, the expected time complexity of Algorithm 2 is $O(ln)$.
% \end{proof}

Lemma~\ref{le-rf} shows that the time complexity of algorithm~\ref{alg-rf} is closely related to the sampling number $l$. The estimation of $l$ needs the Chernoff bound~\cite{ChLu06} and is given in Theorem~\ref{th-rf}.

% Below we apply the Chernoff bound~\cite{ChLu06} to analyze the absolute error guarantee of Algorithm~\ref{alg-rf}.

% \begin{lemma}(Chernoff bound)\label{le-chernoff}
% 	Let $ x_i(1\leq i\leq l) $ be independent random variables satisfying $ |x_i- \mathbb{E}\{x_i\}|\leq M$ for all $ 1\leq i \leq l $. Let $ x = \frac{1}{l} \sum_{i=1}^l x_i $. Then we have
% 	\begin{equation}\label{key}
% 		\mathbb{P}\{|x-\mathbb{E}\{x\}| \leq \epsilon \}\geq 1-2\exp{\left(-\frac{l\epsilon^2}{2({\rm Var}\{x\}l+M\epsilon/3)}\right)}.
% 	\end{equation}
% \end{lemma}

\begin{theorem}\label{th-rf}
     For any node $i$ with $\omega_{ii} >\sigma$, and  parameters  $ \epsilon,\sigma,\delta \in(0,1) $, if $l$ is chosen obeying   $ l =\left \lceil \frac{6+2\epsilon}{3\sigma\epsilon^2}\log{\frac{2}{\delta} }  \right \rceil   $, then the approximation $\widehat{\boldsymbol{\omega}}$$[i]$ of $ \omega_{ii}$  returned by  Algorithm~\ref{alg-rf} satisfies the following relation with the  probability of at least $1-\delta$:
	\begin{equation}\label{eq-omegaii}
		(1-\epsilon) \omega_{ii}	\leq 	  \widehat{\boldsymbol{\omega}}[i] \leq (1+\epsilon) \omega_{ii}.
	\end{equation} 
\end{theorem}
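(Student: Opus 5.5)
The estimator $\widehat{\boldsymbol{\omega}}[i]$ is an average of $l$ i.i.d.\ Bernoulli variables. The plan is to apply the Bernstein-type Chernoff bound of~\cite{ChLu06} and then choose $l$ so that the failure probability is at most $\delta$.

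\textbf{Setup.} For $t=1,\dots,l$ let $x_t=\mathbb{I}_{\{i\in\calR(\phi_t)\}}$. The extension of Wilson's algorithm produces independent forests $\phi_t$ that are uniform on $\calF$. By the probability interpretation $\omega_{ii}=|\calF_{ii}|/|\calF|$, each $x_t$ is Bernoulli with mean $\omega_{ii}$. Algorithm~\ref{alg-rf} returns $x=\frac1l\sum_t x_t=\widehat{\boldsymbol{\omega}}[i]$. Two facts follow directly:
\begin{itemize}
\item $|x_t-\mathbb{E}\{x_t\}|\le M:=1$;
\item ${\rm Var}\{x_t\}=\omega_{ii}(1-\omega_{ii})\le\omega_{ii}$, so ${\rm Var}\{x\}\le \omega_{ii}/l$.
\end{itemize}

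\textbf{Concentration.} I will use the bound in the form
\[
\mathbb{P}\{|x-\mathbb{E}x|\ge \lambda\}\le 2\exp\!\left(-\frac{l\lambda^2}{2(l\,{\rm Var}\{x\}+M\lambda/3)}\right).
\]
Take the absolute deviation $\lambda=\epsilon\omega_{ii}$, which turns the absolute bound into the relative statement~\eqref{eq-omegaii}. Substituting ${\rm Var}\{x\}\le\omega_{ii}/l$ and $M=1$, the exponent is at least
\[
\frac{l\epsilon^2\omega_{ii}^2}{2(\omega_{ii}+\epsilon\omega_{ii}/3)}=\frac{3l\epsilon^2\omega_{ii}}{6+2\epsilon}.
\]

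\textbf{Choosing $l$.} Since $\omega_{ii}>\sigma$, this exponent exceeds $\frac{3l\epsilon^2\sigma}{6+2\epsilon}$. With $l\ge\frac{6+2\epsilon}{3\sigma\epsilon^2}\log\frac2\delta$, the failure probability is at most $2\exp(-\log\frac2\delta)=\delta$. Therefore $(1-\epsilon)\omega_{ii}\le\widehat{\boldsymbol{\omega}}[i]\le(1+\epsilon)\omega_{ii}$ holds with probability at least $1-\delta$.

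\textbf{Main obstacle.} The only delicate step is bookkeeping. One must make sure the Chernoff form from~\cite{ChLu06} is applied with the variance of the average scaled correctly, i.e.\ $l\,{\rm Var}\{x\}$ is the per-sample variance. One must also bound the Bernoulli variance by $\omega_{ii}$, not by $1/4$, because that is what produces the $1/\sigma$ rather than $1/\sigma^2$ dependence. The constant $\frac{6+2\epsilon}{3}$ then falls out exactly.
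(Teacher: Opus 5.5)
Your proposal is correct and follows the paper's proof. Both apply the Bernstein-type Chernoff bound to the i.i.d.\ indicators $\mathbb{I}_{\{i\in\calR(\phi_t)\}}$ with $M=1$ and deviation $\epsilon\omega_{ii}$, using the per-sample variance $\omega_{ii}-\omega_{ii}^2\le\omega_{ii}$, and then use $\omega_{ii}>\sigma$ to arrive at $l\ge\frac{6+2\epsilon}{3\sigma\epsilon^2}\log\frac{2}{\delta}$; the only cosmetic difference is that the paper keeps the exact variance and drops the resulting $-\frac{2}{\epsilon^2}$ term at the end, whereas you bound the variance by $\omega_{ii}$ up front.
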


From Theorem~\ref{th-rf}, it is evident that for a directed graph $G = (V,E)$, if one desires an $\epsilon$ relative error guarantee using Algorithm~\ref{alg-rf} with a probability of at least $1 - 1/n$, then the complexity of the algorithm is $O(\frac{n{\rm log}n}{\sigma\epsilon^2})$. And this guarantee is applicable only to those nodes where $\omega_{ii} > \sigma$. 
However, the situation becomes more complex for graphs containing nodes with high degrees. As shown in~\cite{SuZh23},  $\omega_{ii} $ is upper bounded by  $\frac{2}{2+d_i}$. Consequently, for nodes with significantly large degrees, the number of samples required by Algorithm~\ref{alg-rf} becomes prohibitively large, in order to achieve the desired relative error guarantee.

\section{Estimator with Reduced Variance}
 In this section, we introduce a novel estimator for $\omega_{ii}$, designed to overcome the challenge encountered in Algorithm~\ref{alg-rf}, where the number of samplings may become excessively large.

\subsection{Inspiration from FJ Model on Digraphs}
The inspiration for the new estimator is drawn from the widely recognized Friedkin-Johnsen (FJ) model~\cite{FrJo90}, a prevalent model for opinion evolution and formation. For the FJ opinion model on a digraph $\calG=(V,E)$, each node/agent  $i\in V$ is associated with two opinions: one is the internal opinion $s_i$, the other is the expressed opinion $z_i(t)$ at time $t$. The internal opinion $ s_i $, which lies within the closed interval [0, 1], represents node  $i$'s inherent stance on a specific topic. During the process of opinion evolution, the internal opinion $s_i$ remains constant, while the expressed opinion $z_i(t)$ evolves at time $ t+1$ as follows:
\begin{equation}\label{FJ}
z_i(t+1) = \frac{s_i +\sum_{j\in N(i)} z_j(t)}{1+d_i}.
\end{equation}

 Let $ \sss = (s_1,s_2,\cdots,s_n)^\top$ denote the vector of internal opinions, and let $ \zz(t) = (z_1(t),z_2(t),\cdots,z_n(t))^\top $ denote the vector of expressed opinions at time $ t $. The following lemma reveals the convergence result of the iteration.
 
\begin{lemma}\cite{BiKlOr15}\label{le-z}
Regardless of the initial value of  $\zz(0)$, if $\zz(t+1)$ evolves according to~\eqref{FJ}, where $t=1,2,\cdots,\infty,$ then as $ t $ approaches infinity, $ \zz(t) $ converges to an equilibrium vector  $ \zz = (z_1,z_2,\cdots,z_n)^\top$ satisfying $ \zz =  \mathbf{\Omega}\sss $.
\end{lemma}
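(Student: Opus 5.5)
The plan is to write the iteration in matrix form and show that the error contracts. Rewrite \eqref{FJ} as
\[
\zz(t+1) = (\II+\DD)^{-1}\sss + (\II+\DD)^{-1}\AA\,\zz(t).
\]
Set $\mathbf{P} = (\II+\DD)^{-1}\AA$. Entry $(i,j)$ of $\mathbf{P}$ equals $a_{ij}/(1+d_i)$. Since $d_i$ is the out-degree, row $i$ of $\mathbf{P}$ sums to $d_i/(1+d_i)<1$.

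First I check that $\zz=\mathbf{\Omega}\sss$ is a fixed point. Start from $(\II+\LL)\zz=\sss$. Because $\II+\LL = \II+\DD-\AA$, this rearranges to $(\II+\DD)\zz = \sss + \AA\zz$. Multiplying on the left by $(\II+\DD)^{-1}$, which exists since it is a positive diagonal matrix, gives exactly the fixed-point equation $\zz = (\II+\DD)^{-1}\sss + \mathbf{P}\zz$. The inverse $\mathbf{\Omega}=(\II+\LL)^{-1}$ exists because $\II+\LL$ is strictly row diagonally dominant: $1+d_i > \sum_{j\neq i}|l_{ij}| = d_i$.

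Next I control the error $\ee(t) = \zz(t)-\zz$. Subtracting the fixed-point equation from the iteration gives $\ee(t+1) = \mathbf{P}\,\ee(t)$, and hence $\ee(t) = \mathbf{P}^t \ee(0)$. Let $d_{\max}$ be the maximum out-degree. Since $\mathbf{P}$ is nonnegative with maximum row sum $\gamma = d_{\max}/(1+d_{\max})<1$, the induced infinity norm satisfies $\|\mathbf{P}\|_\infty = \gamma$. Therefore
\[
\|\ee(t)\|_\infty \le \gamma^t \|\ee(0)\|_\infty \to 0
\]
for any initial $\zz(0)$, which proves convergence to $\mathbf{\Omega}\sss$.

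There is no real obstacle here. The only point needing care is using out-degrees, so that the row sums of $\mathbf{P}$ are bounded by $\gamma<1$ uniformly even for digraphs. Because the argument uses the infinity norm rather than symmetry, it applies to digraphs without modification.
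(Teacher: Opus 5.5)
Your proof is correct. The matrix form of \eqref{FJ}, the fixed-point check, the invertibility of $\II+\LL$ by strict diagonal dominance, and the contraction bound $\|\mathbf{P}\|_\infty = d_{\max}/(1+d_{\max})<1$ are all sound. Since only out-degree row sums enter, nothing relies on symmetry.

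The paper does not prove this lemma; it imports it from Bindel, Kleinberg and Oren. Its only formal use of the lemma is the fixed-point identity $\boldsymbol{\rho}^i = (\II+\DD)^{-1}\ee_i + (\II+\DD)^{-1}\AA\boldsymbol{\rho}^i$, which is your first step with $\sss=\ee_i$, in the proof of Lemma~\ref{le-var-omega}. The convergence statement itself serves only as motivation for the estimators. Compared with the citation, your argument gives a self-contained proof that visibly covers digraphs, an explicit geometric rate, and uniqueness of the equilibrium for free.

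It also supplies what is missing for the paper's second iteration \eqref{FJ2}. The paper justifies its convergence only by the invertibility of $\II+\LL^\top$, and invertibility alone does not imply convergence of a fixed-point iteration. There the error is multiplied on the right by $\AA(\II+\DD)^{-1}$. The row and column sums of this matrix can exceed $1$; for example, column $j$ sums to $d^+_j/(1+d_j)$. So your $\infty$-norm bound does not transfer verbatim. However, $\AA(\II+\DD)^{-1} = (\II+\DD)\mathbf{P}(\II+\DD)^{-1}$, so its spectral radius is at most $d_{\max}/(1+d_{\max})<1$. Equivalently, the error contracts in the weighted norm $\|(\II+\DD)\,\cdot\,\|_1$.
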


Lemma~\ref{le-z} elucidates the correlation between the equilibrium expressed opinion $ \zz = (z_1,z_2,\cdots,z_n)^\top$ and the initial opinion $\sss$, with the forest matrix \( \mathbf{\Omega} \) playing a pivotal role in this relationship.   Consequently, Lemma~\ref{le-z} offers a novel approach to determine the diagonal elements of the forest matrix. Let  $\boldsymbol{\rho}^i = \mathbf{\Omega}\ee_i$  represent the $i$-th column of $\mathbf{\Omega}$. Then the $i$-th diagonal element of $\mathbf{\Omega}$ is exactly the $i$-th element ${\rho}^i_i$ of vector $\boldsymbol{\rho}^i$. To obtain the diagonal element $\omega_{ii}$, we can initially set the opinion vector $\sss = \ee_i$ and select   an appropriate vector  $\boldsymbol{\rho}^i(0)$. Then  repeat the iteration equation in~\eqref{FJ} $t$ times to yield $\boldsymbol{\rho}^i(t)$. According to Lemma~\ref{le-z}, the $ i $-th component $ {\rho}^i_i(t) $ of $ \boldsymbol{\rho}^i(t) $ serves as an estimator for $ \omega_{ii} $, and as $ t $ increases, the discrepancy between ${\rho}^i_i(t)$ and $\omega_{ii}$ diminishes.

However, employing the   iteration procedure  to compute all diagonal elements of the forest matrix presents several challenges. For a fixed node $i$, executing the iteration equation~\eqref{FJ} $t$ times needs a time complexity of $O(mt)$. Given that all $n$ diagonal elements require computation, the naive iteration approach demands  time complexity of $O(mnt)$, which is computationally expensive. 

\subsection{ Novel Unbiased Estimator}
To address this challenge, it is pertinent to note that for a specific node $i$, the required number of iterations $t $  to achieve an error bound between $\boldsymbol{\rho}^i$ and $\boldsymbol{\rho}^i(t)$ varies significantly with the initial vector $\boldsymbol{\rho}^i(0)$. Specifically, if $\boldsymbol{\rho}^i(0) $ precisely matches the equilibrium vector $\boldsymbol{\rho}^i = \mathbf{\Omega} \ee_i$, the iterative equation will maintain the value of  $\boldsymbol{\rho}^i$  unchanged for any $t = 1, \cdots$, as $ \boldsymbol{\rho}^i$  is the system's equilibrium vector.  In this scenario, the required iteration number  $t$ is zero. However, the exact value of  $\boldsymbol{\rho}^i$ is typically unknown. Adopting a similar idea, we can initialize $\boldsymbol{\rho}^i(0)$  as an easily obtainable estimator of $\boldsymbol{\rho}^i$, which brings  $\boldsymbol{\rho}^i$  and $\boldsymbol{\rho}^i(0)$  closer initially, thereby reducing the number of iterations $t$ needed.

To achieve this goal, we introduce some quantities. For a spanning converging forest $\phi\in\calF$, define a random variable $\widehat{\omega}_{ji}$ as $\widehat{\omega}_{ji}(\phi) \triangleq \mathbb{I}_{\{r_{\phi}(j)=i\}}$. The estimator $\widehat{\omega}_{ji}$ is an unbiased estimator of $\omega_{ji}$ if we randomly select $\phi$ since 
    $\mathbb{E}\{\widehat{\omega}_{ji}(\phi) \}= \mathbb{P}\{ r_{\phi}(j)=i\} = {|\calF_{ji}|}/{|\calF|} = \omega_{ij}.$
Then, using Wilson's algorithm we generate $l$ spanning converging forests $\phi_1,\cdots,\phi_l$, and set the initial iteration vector $\boldsymbol{\rho}^i(0) = ({\rho}^i_1(0),\cdots,{\rho}^i_n(0))$, where ${\rho}^i_j(0) = {\frac{1}{l}\sum_{k=1}^l \widehat{\omega}_{ji}(\phi_k)}$. After that, we repeat the iteration $t$ times and get $\boldsymbol{\rho}^i(t)$, which serves as an estimator for $\boldsymbol{\rho}^i$. 

Recall that our objective is to reduce the iteration times,  focusing on the $i$-th component of $\boldsymbol{\rho}^i$. A bold and natural idea emerges: What if we only perform one iteration? This case offers a unique perspective that might be the key to solving the challenge. In this scenario,  the estimator  $\boldsymbol{\rho}^i(t)$  can be elegantly expressed as: $\boldsymbol{\rho}^i(1) = (\II+\DD)^{-1}\ee_i+(\II+\DD)^{-1}\AA\boldsymbol{\rho}^i(0)$. Moreover, since no further iterations are needed, we only need to focus on the $i$-th component of $\boldsymbol{\rho}^i(1)$, and do not need to calculate the other elements. That is, we only need to calculate that $ \rho^i_i(1) = \frac{1}{1+d_i}(1+\sum_{j\in N(i)}\rho^i_j(0)) =  \frac{1}{1+d_i}\big(1+\frac{1}{l}\sum_{j\in N(i)}{\sum_{k=1}^l \widehat{\omega}_{ji}(\phi_k)}\big).$

From the expression of $\rho^i_i(1)$, we  define a new estimator as $   \widetilde{\omega}_{ii}(\phi) \triangleq \frac{1}{1+d_i}\big(1 + \sum_{j\in N(i)}\widehat{\omega}_{ji}(\phi)\big).$ Then we derive that $\rho^i_i(1) = \frac{1}{l}\sum_{k=1}^l  \widetilde{\omega}_{ii}(\phi_k)$. That is, we can directly use Wilson's algorithm to sample $l$ spanning converging forests, and then obtain the value of $\widetilde{\omega}_{ii}(\phi_k)$. The average of $l$ values is equal to the $i$-th component of $\boldsymbol{\rho}^i(1)$. In Algorithm~\ref{alg-rfwithvar}, we provide the details of this method to obtain the novel variable. This algorithm takes a parameter $l$,  the number of spanning converging forests to be sampled, and then returns a vector  $\widetilde{\boldsymbol{\omega}}$  as an estimator for the diagonal elements of the forest matrix.

Algorithm~\ref{alg-rfwithvar} starts by initializing $\widetilde{\omega}_{ii}(\phi)$  to $\frac{1}{1+d_i}$, then  generates  $l$  spanning converging forests using Wilson's Algorithm. After generating each forest $\phi_t$, a loop is executed to update the vector $\widetilde{\boldsymbol{\omega}}$. Similar to the analysis of Lemma~\ref{le-rf}, the total computational complexity of Algorithm~\ref{alg-rfwithvar} is $O(ln)$. Lemma~\ref{le-var-omega} shows that $\widetilde{\omega}_{ii}(\phi_k)$ is still an unbiased estimator of $\omega_{ii}$  but has  less variance than $\widehat{\omega}_{ii}$.

\begin{algorithm} 
	\caption{$\textsc{SCFV}(\calG,l)$}
	\label{alg-rfwithvar}
	\Input{ $\calG$:a digraph  \\ $l$:number of generated spanning converging forest
	}
	\Output{$\widetilde{\boldsymbol{\omega}}$ : a vector approximating the diagonal elements of the forest matrix}
 \textbf{Initialize} :
	$\widetilde{\boldsymbol{\omega}}$[$i$] $ \leftarrow \frac{1}{1+d_i} $, $i= 1,2,\ldots,n $\\
 \For{$t = 1$ to $l$}{
 $\phi_t$ $\leftarrow$ a spanning converging forest generated from $\calG$\\
 \For{$ i = 1 $ to $ n $}{
 $j \leftarrow r_{\phi_t}(i)$\\
                 \If{$i \in N({j})$}{
                $\widetilde{\boldsymbol{\omega}}$[$j$]$\leftarrow$ $\widetilde{\boldsymbol{\omega}}$[$j$] $ + \frac{1}{l(1+d_j)}$\\

                }
 }
 
	 }
\textbf{return} $\widetilde{\boldsymbol{\omega}}$ \;
\end{algorithm}

\begin{lemma}\label{le-var-omega}
       For any node $i\in V$,   $\widetilde{\omega}_{ii}(\phi)$ is an unbiased estimator of $\omega_{ii}$.  Let $q^{(i)}_{jk}$ represent the ratio of the number of spanning converging forests where both roots of $j$ and $k$ are node $i$  to the total number of spanning converging forests. Then, the variance of this estimator is   ${\rm Var}\{\widetilde{\omega}_{ii}(\phi)\} = \frac{3\omega_{ii}}{1+d_i} - \frac{2}{(1+d_i)^2} + \frac{2\sum_{j,k\in N(i)}q^{(i)}_{jk}}{(1+d_i)^2}-\omega_{ii}^2,$  which is always less than or equal to the variance of the estimator $\widehat{\omega}_{ii}$. 
\end{lemma}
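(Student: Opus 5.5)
The plan is to derive both claims from the defining identity $(\II+\LL)\mathbf{\Omega}=\II$ and the root-indicator representation $\widehat{\omega}_{ji}(\phi)=\mathbb{I}_{\{r_\phi(j)=i\}}$, with $\phi$ uniform on $\calF$ so that $\mathbb{E}\{\widehat{\omega}_{ji}(\phi)\}=|\calF_{ji}|/|\calF|=\omega_{ji}$. Throughout, write $S(\phi)=\sum_{j\in N(i)}\widehat{\omega}_{ji}(\phi)$, so that $\widetilde{\omega}_{ii}(\phi)=\frac{1+S(\phi)}{1+d_i}$.

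\emph{Unbiasedness.} Since $\LL=\DD-\AA$, the $(i,i)$ entry of $(\II+\LL)\mathbf{\Omega}=\II$ reads
\[
(1+d_i)\omega_{ii}-\sum_{j\in N(i)}\omega_{ji}=1 .
\]
By linearity of expectation, $\mathbb{E}\{S\}=\sum_{j\in N(i)}\omega_{ji}=(1+d_i)\omega_{ii}-1$. Hence $\mathbb{E}\{\widetilde{\omega}_{ii}\}=\omega_{ii}$. This is the equilibrium property of the FJ iteration (Lemma~\ref{le-z}) applied for one step at the $i$-th coordinate.

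\emph{Variance formula.} I compute $\mathbb{E}\{(1+S)^2\}=1+2\mathbb{E}\{S\}+\mathbb{E}\{S^2\}$. The key observation is that indicators square to themselves. So $\mathbb{E}\{S^2\}$ splits into diagonal terms $\sum_{j\in N(i)}\omega_{ji}=(1+d_i)\omega_{ii}-1$ and off-diagonal terms. For $j\neq k$, $\mathbb{E}\{\widehat{\omega}_{ji}\widehat{\omega}_{ki}\}$ is exactly $q^{(i)}_{jk}$. Each unordered pair appears twice, which yields $2\sum q^{(i)}_{jk}$ when the sum in the statement runs over unordered pairs of distinct $j,k\in N(i)$.

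Collecting terms gives
\[
\mathbb{E}\{(1+S)^2\}=3(1+d_i)\omega_{ii}-2+2\sum q^{(i)}_{jk} .
\]
Dividing by $(1+d_i)^2$ and subtracting $\omega_{ii}^2$ gives the stated expression. The one delicate point is this bookkeeping of which pairs $q^{(i)}_{jk}$ ranges over. The factor $3$ and the $-2$ only come out right with the distinct/unordered convention, so I will state that convention explicitly.

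\emph{Comparison with $\widehat{\omega}_{ii}$.} Since $\widehat{\omega}_{ii}$ is a Bernoulli variable with mean $\omega_{ii}$, we have ${\rm Var}\{\widehat{\omega}_{ii}\}=\omega_{ii}-\omega_{ii}^2$. Because the two estimators share the mean $\omega_{ii}$, it suffices to show $\mathbb{E}\{\widetilde{\omega}_{ii}^2\}\le\omega_{ii}$. Rather than bounding the $q^{(i)}_{jk}$ directly, I use a pointwise argument. $S(\phi)$ counts neighbours of $i$ whose root is $i$, so $0\le S\le d_i$, and therefore $\widetilde{\omega}_{ii}(\phi)\in[\frac{1}{1+d_i},1]$. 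It follows that $\widetilde{\omega}_{ii}^2\le\widetilde{\omega}_{ii}$ pointwise. Taking expectations gives $\mathbb{E}\{\widetilde{\omega}_{ii}^2\}\le\omega_{ii}$, hence ${\rm Var}\{\widetilde{\omega}_{ii}\}\le{\rm Var}\{\widehat{\omega}_{ii}\}$.
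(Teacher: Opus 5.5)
Your proof is correct. The unbiasedness and variance parts follow the paper's route essentially verbatim. The paper reads $\omega_{ii}=\frac{1}{1+d_i}\bigl(1+\sum_{j\in N(i)}\omega_{ji}\bigr)$ off the FJ equilibrium, while you take the $(i,i)$ entry of $(\II+\LL)\mathbf{\Omega}=\II$, but this is the same identity. Your explicit convention that $q^{(i)}_{jk}$ is summed over unordered pairs of distinct neighbours is exactly the one the paper uses implicitly. Its later bound $2\sum q^{(i)}_{jk}\le d_i(d_i-1)\,\omega_{ii}$ confirms this.

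The comparison step is where you genuinely differ. The paper works through the variance formula. It observes that $r_\phi(j)=r_\phi(k)=i$ forces $i\in\calR(\phi)$, so each $q^{(i)}_{jk}\le\omega_{ii}$. It then bounds the cross-term sum by $d_i(d_i-1)\,\omega_{ii}$ and simplifies to the explicit gap
\[
{\rm Var}\{\widehat{\omega}_{ii}\}-{\rm Var}\{\widetilde{\omega}_{ii}\}\ \ge\ \frac{2(1-\omega_{ii})}{(1+d_i)^2}.
\]
Your argument bypasses the cross terms entirely. It uses the general fact that an unbiased estimator with values in $[0,1]$ has variance at most that of the Bernoulli estimator with the same mean. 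The exact gap is $\mathbb{E}\{\widetilde{\omega}_{ii}(1-\widetilde{\omega}_{ii})\}$.

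Your route is shorter and does not need the variance formula or any $q^{(i)}_{jk}$ bookkeeping for the comparison. It also yields a sharper quantitative gap if you want one. Since $\widetilde{\omega}_{ii}\ge\frac{1}{1+d_i}$ pointwise, the gap is at least $\frac{1-\omega_{ii}}{1+d_i}$. This dominates the paper's $\frac{2(1-\omega_{ii})}{(1+d_i)^2}$ whenever $d_i\ge 1$.
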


Lemma~\ref{le-var-omega} highlights the reduced variance of the random variable $\widetilde{\omega}_{ii}(\phi)$. However, a challenge arises when we invoke the Chernoff bound to determine the requisite sampling number. Since the third term of the variance  is inherently complex, deriving a proper upper bound for this term is not straightforward, which complicates the task of determining an appropriate sampling number $l$. Consequently, although the new estimator  $\widetilde{\omega}_{ii}(\phi) $ facilitates a reduction in variance, we are unable to obtain a satisfying theoretical result for the sampling number $ l $, due to the complex form of the third term  for ${\rm Var}\{\widetilde{\omega}_{ii}(\phi)\}$.

% Specifically, for the sampling number to be independent of the lower bound of $ \omega_{ii} $ for all nodes $ i \in V $, from the inequality~\eqref{ltomeet}, it is desirable that the ratio $ {{\rm Var}\{\widetilde{\omega}_{ii}\}}/{\omega_{ii}^2} $ remains bounded by a constant $ K $. However, this aspiration faces a challenge when we examine the third term of the variance, namely $  \frac{2\sum_{j,k\in N(i)}q^{(i)}_{jk}}{(1+d_i)^2\omega_{ii}^2}  $. Estimating this term is not straightforward, which complicates the task of determine an appropriate sampling numbers $l$. 

% Consequently, while the reduced variance of $ \widetilde{\omega}_{ii}(\phi) $ is promising, the complexities introduced by this term need further investigation. 

% A closer look reveals that the third term of the variance for $ \widetilde{\omega}_{ii}(\phi) $  arises from the relationship that for distinct nodes  $i,j,k \in V$, we have  $\mathbb{E}\{\widehat{\omega}_{ji}(\phi)\widehat{\omega}_{ki}(\phi)\} = q_{jk}^{(i)}$ . This represents the proportion of scenarios where nodes  $j$  and  $k$  both have  $i$  as their root across all spanning converging forests. It's worth noting that in the spanning converging forest $\phi$, although two nodes might share the same root, any given node can have only one root. This observation provides a spark of inspiration. By leveraging this unique characteristic, we can craft a new estimator. This estimator would be devoid of the cross-product term that complicates our variance, yet retain the other attributes of  $\widetilde{\omega}_{ii}(\phi)$.

\section{ New Iteration Equation for Further Variance Reduction}
In this section, we introduce a new iteration equation and propose a superior estimator.  This novel estimator  notably omits the complex cross-product term  in the variance of $\widetilde{\omega}_{ii}(\phi)$, which further reduces the variance  in samplings, and allows to derive a better theoretical result.

\subsection{ New Iteration Equation}
To further refine our estimator, it is insightful to revisit the estimator $\widetilde{\omega}_{ii}(\phi)$, which draws inspiration from \eqref{FJ}. In \eqref{FJ}, the update of expressed opinion for each node is influenced by its neighbors. Specifically, $z_i(t+1)$ is updated according to the value of $z_j(t)$ where $j\in N(i)$. A novel idea emerges when we consider inverting the direction of opinion dissemination, implying that $z_i(t+1)$ is updated according to the value of $z_j(t)$ where $i\in N(j)$. In this scenario, our focus shifts to the $i$-th row of the forest matrix $\mathbf{\Omega}$, denoted by ${\boldsymbol{\gamma}^i}^\top  \triangleq \ee^\top_i \mathbf{\Omega}$. The $i$-th component of ${\boldsymbol{\gamma}^i}$ precisely corresponds to the $i$-th diagonal element of the forest matrix. To compute ${\boldsymbol{\gamma}^i}$, we propose an iterative equation similar to \eqref{FJ}, as follows:
\begin{equation}\label{FJ2}
    {\boldsymbol{\gamma}^i}^\top(t+1) = \ee_i^\top(\II+\DD)^{-1} +{\boldsymbol{\gamma}^i}^\top(t)\AA(\II+\DD)^{-1}.
\end{equation}
Since matrix  $\II+\LL$  is invertible, it is the same with  matrix $\II+\LL^\top$. Consequently, the iteration of ${\boldsymbol{\gamma}^i}^\top(t)$ in \eqref{FJ2} converges to ${\boldsymbol{\gamma}^i}^\top$. Consistent with the previous analysis, we aim to restrict the process to a single iteration, as multiple iterations are time unaffordable.

To attain this goal, we continue to leverage Wilson's algorithm to obtain ${\boldsymbol{\gamma}^i(0)}$, which serves as a preliminary estimation of the $i$-th row of the forest matrix. Specifically, we employ Wilson's algorithm to generate $l$ spanning converging forests $\phi_1,\cdots,\phi_l$, and set the initial iteration vector $\boldsymbol{\gamma}^i(0) = ({\gamma}^i_1(0),\cdots,{\gamma}^i_n(0))$, where ${\gamma}^i_j(0) = {\frac{1}{l}\sum_{k=1}^l \widehat{\omega}_{ij}(\phi_k)}$. Upon performing one time of iteration according to \eqref{FJ2}, we derive that 
$$ \gamma^i_i(1) = \frac{1}{1+d_i}\big(1+\sum_{j: i\in N(j)}\gamma^i_j(0)\big) =  \frac{1}{1+d_i}\big(1+\frac{1}{l}\sum_{j: i\in N(j)}{\sum_{k=1}^l \widehat{\omega}_{ij}(\phi_k)}\big).$$

% We obtain:
% \begin{equation}\label{Omega0}
%     \mathbf{\Omega} = (\II+\DD)^{-1} + \PP\mathbf{\Omega}. 
% \end{equation}

% To derive a new formulation, we first multiply both sides of Equation \eqref{Omega0} by  $(\II+\LL)$  on the right-hand side and  $(\II+\DD)$  on the left-hand side. Subsequently, we also multiply  $(\II+\LL)$  on the left-hand side and  $(\II+\DD)$  on the right-hand side of the equation. This mathematical manipulation leads us to Equation \eqref{Omega0new}, providing the foundation for our novel estimator.

% \begin{equation}\label{Omega0new}
%     \mathbf{\Omega} = (\II+\DD)^{-1} + \mathbf{\Omega}\AA(\II+\DD)^{-1}. 
% \end{equation}

With this new formulation, we  define a novel random variable, $\widebar{\omega}_{ii}(\phi)$, for any spanning converging forest  $\phi \in \calF $ and $i\in V$  as 
$      \widebar{\omega}_{ii}(\phi) \triangleq \frac{1}{1+d_i} + \frac{1}{1+d_i}\sum_{j:i\in N(j)}\widehat{\omega}_{ij}(\phi).   
$

Then, it follows that $\gamma^i_i(1) = \frac{1}{l}\sum_{k=1}^l  \widebar{\omega}_{ii}(\phi_k)$. Thus, in order to obtain $\gamma^i_i(1)$,  we can directly employ Wilson's algorithm to sample $l$ spanning converging forests, and obtain the value of $\widebar{\omega}_{ii}(\phi_k)$. The average of these $l$ values is equal to the $i$-th component of $\boldsymbol{\gamma}^i(1)$. We detail this in Algorithm~\ref{alg-rfwithvar+}, which demonstrates the method to derive this novel variable. The time complexity of Algorithm~\ref{alg-rfwithvar+} is  $O(ln)$, where $l$ represents the number of spanning converging forests sampled.

%%%%%%%%%%%%%%%%%%%%%%%%%%%%%%%%%%%%%%%%%%%%%%%	
% Algorithm 2
%%%%%%%%%%%%%%%%%%%%%%%%%%%%%%%%%%%%%%%%%%%%%%%	
% \begin{small}
\begin{algorithm}
	\caption{$\textsc{SCFV+}(\calG,l)$}
	\label{alg-rfwithvar+}
	\Input{ $\calG$ : a digraph  \\ $l$:number of generated spanning converging forest
	}
	\Output{$\widebar{\boldsymbol{\omega}}$ : a vector approximating the diagonal elements of the forest matrix}
 \textbf{Initialize} :
	$\widebar{\boldsymbol{\omega}}$[$i$] $ \leftarrow \frac{1}{1+d_i} $, $i= 1,2,\ldots,n $\\
 \For{$t = 1$ to $l$}{
 $\phi_t$ $\leftarrow$ a spanning converging forest generated from $\calG$\\
 \For{$ i = 1 $ to $ n $}{
 $j \leftarrow r_{\phi_t}(i)$\\
                 \If{$i \in N({j})$}{
                $\widebar{\boldsymbol{\omega}}$[$i$]$\leftarrow$ $\widebar{\boldsymbol{\omega}}$[$i$] $ + \frac{1}{l(1+d_i)}$\\

                }
 }
% InForest[$ i $] $ \leftarrow $ false , $i= 1,2,\ldots,n $\;
% 	Next[$ i $] $ \leftarrow -1$ , $i= 1,2,\ldots,n $\;
%         RootIndex[$i$] $ \leftarrow 0 $, $i= 1,2,\ldots,n $\\
% 	\For{$ i = 1 $ to $ n $}
% 	{$ u \leftarrow i $\; 
% 		\While{not InForest[$ u $]}{
% 			seed $ \leftarrow $ \textsc{Rand}()  \; 
% 			\If{seed  $\leq  \frac{1}{1+d_u} $}{
% 				InForest[$ u $] $ \leftarrow $ true\;
% 				Next[$ u $]$ \leftarrow -1 $\;
%     RootIndex[$ u $]$ \leftarrow u $\;
% }
% 			\Else{
% 				Next[$ u $] $ \leftarrow $ \textsc{RandomSuccessor}($ u,\calG $)\; 
% 				u $ \leftarrow $ Next[$ u $]\;
% 			}
% 		}
%             RootNow $\leftarrow  $ RootIndex[$u$]\;
%             $ u \leftarrow i $\\
% 		\While{not InForest[$ u $]}{
% 			InForest[$ u $] $ \leftarrow $ true\;
%                 \If{$u \in N(\rm{Rootnow})$}{
%                 $\widebar{\boldsymbol{\omega}}$[$u$]$\leftarrow$ $\widebar{\boldsymbol{\omega}}$[$u$] $ + \frac{1}{l(1+d_u)}$\\

%                 }
                
% 			u $ \leftarrow $ Next[$ u $]\;
% 		}
% 	}
	 }
\textbf{return} $\widebar{\boldsymbol{\omega}}$ \;
\end{algorithm}
% \end{small}
%%%%%%%%%%%%%%%%%%%%%%%%%%%%%%%%%%%%%%%%%%%%%%%	

% \begin{equation}
% \begin{aligned}   
%       \widebar{\omega}_{ii}(\phi) &= \ee_{i}^\top(\II+\DD)^{-1} \ee_{i}+ \ee_{i}^\top\mathbf{\Omega}\AA(\II+\DD)^{-1}\ee_{i} \\  &= \frac{1}{1+d_i} + \frac{1}{1+d_i}\sum_{j:i\in N(j)}\widehat{\omega}_{ij}(\phi).   
% \end{aligned}
% \end{equation}

\subsection{Advanced Estimator Analysis and Implementation}
After having proposed a novel estimator $\widebar{\omega}_{ii}$ for $\omega_{ii}$, we  show that this new estimator $\widebar{\omega}_{ii}(\phi)$ is superior to $\widetilde{\omega}_{ii}(\phi)$.

\begin{lemma}\label{le-var-omega-new}
    For any node  $i \in V$, $\widebar{\omega}_{ii}(\phi)$  serves as an unbiased estimator of  $\omega_{ii} $. The variance of this estimator is 
    ${\rm Var}\{\widebar{\omega}_{ii}(\phi)\} = \frac{3\omega_{ii}}{1+d_i} - \frac{2}{(1+d_i)^2} -\omega_{ii}^2,$ which is consistently less than or equal to the variance of the estimator  $\widetilde{\omega}_{ii}(\phi)$, as well as the variance of the estimator $ \widehat{\omega}_{ii}(\phi) $.
\end{lemma}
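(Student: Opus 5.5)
Unbiasedness follows by taking expectations in the definition of $\widebar{\omega}_{ii}(\phi)$. Using $\mathbb{E}\{\widehat{\omega}_{ij}(\phi)\}=\omega_{ij}$ and linearity,
\[
\mathbb{E}\{\widebar{\omega}_{ii}(\phi)\}=\frac{1}{1+d_i}\Big(1+\sum_{j:\,i\in N(j)}\omega_{ij}\Big)=\frac{1}{1+d_i}\big(1+(\mathbf{\Omega}\AA)_{ii}\big).
\]
It therefore suffices to check the $(i,i)$ entry of the matrix identity $\mathbf{\Omega}=(\II+\DD)^{-1}+\mathbf{\Omega}\AA(\II+\DD)^{-1}$. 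This identity is equivalent to $\mathbf{\Omega}(\II+\DD)=\II+\mathbf{\Omega}\AA$, i.e. $\mathbf{\Omega}(\II+\DD-\AA)=\II$, which holds by the definition of $\mathbf{\Omega}$. It is the fixed-point form of \eqref{FJ2}.

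For the variance, the key structural fact is that in any spanning converging forest each node $i$ has exactly one root. Hence the indicators $\widehat{\omega}_{ij}(\phi)=\mathbb{I}_{\{r_\phi(i)=j\}}$ over distinct $j$ are mutually exclusive, so $\widehat{\omega}_{ij}\widehat{\omega}_{ik}=0$ for $j\neq k$ and $\widehat{\omega}_{ij}^2=\widehat{\omega}_{ij}$. Write $S=\sum_{j:\,i\in N(j)}\widehat{\omega}_{ij}(\phi)$. Then $S\in\{0,1\}$, so $S^2=S$ and
\[
\mathbb{E}\{S\}=\sum_{j:\,i\in N(j)}\omega_{ij}=(1+d_i)\omega_{ii}-1 .
\]
Consequently
\[
\mathbb{E}\{\widebar{\omega}_{ii}^2\}=\frac{1+3\,\mathbb{E}\{S\}}{(1+d_i)^2}=\frac{3(1+d_i)\omega_{ii}-2}{(1+d_i)^2},
\]
where I used $\mathbb{E}\{(1+S)^2\}=1+2\mathbb{E}\{S\}+\mathbb{E}\{S^2\}=1+3\mathbb{E}\{S\}$. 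Subtracting $\omega_{ii}^2$ gives exactly $\frac{3\omega_{ii}}{1+d_i}-\frac{2}{(1+d_i)^2}-\omega_{ii}^2$.

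For the comparisons, I take ${\rm Var}\{\widetilde{\omega}_{ii}\}$ from Lemma~\ref{le-var-omega}. Its difference from ${\rm Var}\{\widebar{\omega}_{ii}\}$ is $\frac{2\sum_{j,k\in N(i)}q^{(i)}_{jk}}{(1+d_i)^2}$. This is nonnegative because each $q^{(i)}_{jk}$ is a ratio of forest counts, which settles the first comparison. Since ${\rm Var}\{\widehat{\omega}_{ii}\}=\omega_{ii}-\omega_{ii}^2$ for a Bernoulli variable, the second comparison reduces to
\[
\omega_{ii}-\frac{3\omega_{ii}}{1+d_i}+\frac{2}{(1+d_i)^2}\geq 0 .
\]
Setting $x=1+d_i\ge 1$ and multiplying by $x^2$, this becomes $\omega_{ii}x(x-3)+2\ge 0$. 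For $x\ge 3$ it is immediate. For $x=1$ it reads $2-2\omega_{ii}\ge 0$, which holds since $\omega_{ii}\le 1$. For $x=2$ it reads $2-2\omega_{ii}\ge0$, which again holds. Alternatively, the second comparison follows by chaining with Lemma~\ref{le-var-omega}, which already shows ${\rm Var}\{\widetilde{\omega}_{ii}\}\le{\rm Var}\{\widehat{\omega}_{ii}\}$.

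The main obstacle is the variance computation, specifically recognizing that the cross terms vanish by root uniqueness. This mutual exclusivity is exactly what distinguishes $\widebar{\omega}_{ii}$ from $\widetilde{\omega}_{ii}$, whose summands concern different nodes $j,k$ sharing the root $i$ and so can both equal one. Everything else is routine algebra.
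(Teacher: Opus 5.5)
Your proof is correct and follows essentially the same route as the paper. Unbiasedness comes from the $(i,i)$ entry of the fixed-point identity $\mathbf{\Omega}=(\II+\DD)^{-1}+\mathbf{\Omega}\AA(\II+\DD)^{-1}$; the variance comes from the mutual exclusivity $\widehat{\omega}_{ij}\widehat{\omega}_{ik}=0$ (root uniqueness), which removes the cross-product term; and the comparisons come from nonnegativity of the $q^{(i)}_{jk}$ term together with Lemma~\ref{le-var-omega}. You also write out algebra the paper leaves implicit, namely $S\in\{0,1\}$ and the direct check $\omega_{ii}x(x-3)+2\ge 0$.
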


In contrast to $\widetilde{\omega}_{ii}$, the variance of $\widebar{\omega}_{ii}$ does not include the complex cross-product term, which previously posed significant challenges when deriving its upper bound. The absence of this complex term in the new estimator simplifies our analysis, as highlighted in  Lemma~\ref{le-varbound}. This lemma illuminates a significant attribute of the estimator $\widebar{\omega}_{ii}$. Specifically, it delineates an upper bound for the ratio of the variance of $\widebar{\omega}_{ii}$ to the square of $\omega_{ii}$, a crucial factor when determining the sampling number.

\begin{lemma}\label{le-varbound}
    Given a directed graph $\calG = (V,E)$, for any node $i\in V$, the ratio $ \frac{{\rm Var}\{\widebar{\omega}_{ii}\}}{\omega_{ii}^2}$ is constrained by $\frac{1}{8}$. Formally, $ \frac{{\rm Var}\{\widebar{\omega}_{ii}\}}{\omega_{ii}^2}\leq \frac{1}{8}$.
\end{lemma}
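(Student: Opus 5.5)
Starting from the variance formula of Lemma~\ref{le-var-omega-new}, write $a = \frac{1}{1+d_i}\in(0,1]$ and $x=\omega_{ii}$. The ratio then becomes
\[
\frac{{\rm Var}\{\widebar{\omega}_{ii}\}}{\omega_{ii}^2} = \frac{3a}{x} - \frac{2a^2}{x^2} - 1 .
\]
Substituting $u = a/x$ turns this into the quadratic $f(u) = 3u - 2u^2 - 1$. The bound then reduces to maximizing a one-variable concave quadratic, with the admissible range of $u$ supplied by the known bounds on $\omega_{ii}$.

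The admissible range comes from the bounds $\frac{1}{1+d_i}\le \omega_{ii}\le \frac{2}{2+d_i}$ of~\cite{SuZh23}, quoted in the preliminaries. The lower bound $x\ge a$ gives $u\le 1$. The upper bound gives $u \ge \frac{2+d_i}{2(1+d_i)} = \frac12 + \frac{a}{2}>\frac12$, though for the stated constant we only need $u\in(0,1]$. The function $f(u) = -(2u^2-3u+1) = -(2u-1)(u-1)$ is a downward parabola with vertex at $u=\frac34$. Its maximum is $f(3/4) = \frac94 - \frac98 - 1 = \frac18$. Hence $\frac{{\rm Var}\{\widebar{\omega}_{ii}\}}{\omega_{ii}^2}\le \frac18$ for every admissible $u$, which is the claim.

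As a sanity check, $f(u)\ge 0$ exactly on $[\frac12,1]$. The range $u\ge \frac12+\frac a2$ derived above lies inside this interval, consistent with the quantity being a variance. There is essentially no obstacle beyond noticing the substitution $u=a/x$ and checking that the degree-dependent bounds on $\omega_{ii}$ hold for digraphs, which the paper already cites. If a sharper, degree-dependent bound were wanted, one would additionally restrict $u$ to $[\frac12+\frac a2, 1]$ and note that the vertex $\frac34$ lies in this range only when $a\le\frac12$, i.e., $d_i\ge1$. The constant $\frac18$ is in any case a uniform bound.
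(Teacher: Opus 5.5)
Your proof is correct and is essentially the paper's argument. The paper completes the square in $\frac{1}{\omega_{ii}}$, writing the ratio as $-\frac{2}{(1+d_i)^2}\bigl(\frac{1}{\omega_{ii}}-\frac{3(1+d_i)}{4}\bigr)^2+\frac18$, which is exactly your $f(u)=-2\bigl(u-\frac34\bigr)^2+\frac18$ with $u=\frac{1}{(1+d_i)\omega_{ii}}$; as in the paper, no range restriction on $u$ is needed, since $\frac18$ is the maximum of this concave quadratic over all real $u$.
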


% \begin{proof}
%     According to Lemma~\ref{le-var-omega-new}, we derive that
%     \begin{equation}
%     \begin{aligned}
%               &|\frac{{\rm Var}\{\widebar{\omega}_{ii}\}}{\omega_{ii}^2}| = \frac{3}{(1+d_i)\omega_{ii}} - \frac{2}{(1+d_i)^2\omega_{ii}^2} -1 \\&= -\frac{2}{(1+d_i)^2}(\frac{1}{\omega_{ii}}-\frac{3(1+d_i)}{4})^2+\frac{1}{8}  \leq \frac{1}{8}
%     \end{aligned}
%     \end{equation} 
% The equation holds when $\frac{1}{\omega_{ii}}=\frac{3(1+d_i)}{4}$,  implying $\omega_{ii} = \frac{4}{3(1+d_i)}$, which finishes the proof.
% \end{proof}

Armed with the insights from Lemma~\ref{le-varbound}, we proceed  to  establish  a connection between the relative error bound and the number of samples in Algorithm~\ref{alg-rfwithvar+}. 

\begin{theorem}\label{th-Fast}
For any  $ \epsilon\in(0,1) $  and $ \delta\in(0,1) $, if $l$ is chosen obeying   $ l =\left \lceil  (\frac{2}{3\epsilon}+\frac{1}{4\epsilon^2})\log(\frac{2}{\delta})  \right \rceil   $, then the approximation $ \widebar{\boldsymbol{\omega}}[i]$ of $ \omega_{ii}$ returned by Algorithm~\ref{alg-rfwithvar+} satisfies the following relation with probability of at least $1-\delta$:  $(1-\epsilon)\omega_{ii}	\leq 	  \widebar{\boldsymbol{\omega}}[i] \leq (1+\epsilon) \omega_{ii}$.
	% \begin{equation}\label{key}
	% 	(1-\epsilon)\omega_{ii}	\leq 	  \widebar{\boldsymbol{\omega}}[i] \leq (1+\epsilon) \omega_{ii}.
	% \end{equation} 
\end{theorem}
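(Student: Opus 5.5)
We apply a Bernstein-type Chernoff bound to the $l$ independent copies $\widebar{\omega}_{ii}(\phi_1),\dots,\widebar{\omega}_{ii}(\phi_l)$ generated in Algorithm~\ref{alg-rfwithvar+}. Their average is exactly $\widebar{\boldsymbol{\omega}}[i]$. By Lemma~\ref{le-var-omega-new}, each summand has mean $\omega_{ii}$, and by Lemma~\ref{le-varbound} its variance is at most $\omega_{ii}^2/8$. The form we use (as in~\cite{ChLu06}) is as follows: if $x_1,\dots,x_l$ are independent with $|x_k-\mathbb{E}x_k|\le M$ and $x=\frac1l\sum_k x_k$, then
\[
\mathbb{P}\{|x-\mathbb{E}x|\ge \epsilon\omega_{ii}\}\le 2\exp\Big(-\frac{l\epsilon^2\omega_{ii}^2}{2\,{\rm Var}\{x_k\}+\frac{2}{3}M\epsilon\omega_{ii}}\Big).
\]

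The first step is to bound the range $M$ relative to $\omega_{ii}$. For a fixed forest $\phi$, node $i$ has exactly one root $r_\phi(i)$, so $\sum_{j: i\in N(j)}\widehat{\omega}_{ij}(\phi)\in\{0,1\}$. This is precisely the feature that removed the cross term. Hence $\widebar{\omega}_{ii}(\phi)\in\{\frac{1}{1+d_i},\frac{2}{1+d_i}\}$. We now combine this with the known bounds $\frac{1}{1+d_i}\le\omega_{ii}\le\frac{2}{2+d_i}<\frac{2}{1+d_i}$. It follows that $|\widebar{\omega}_{ii}(\phi)-\omega_{ii}|\le \frac{1}{1+d_i}\le \omega_{ii}$, so we may take $M=\omega_{ii}$. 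The key observation is that both the range and the standard deviation scale with $\omega_{ii}$, so no lower bound $\sigma$ on $\omega_{ii}$ is needed, unlike in Theorem~\ref{th-rf}.

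Substituting ${\rm Var}\le \omega_{ii}^2/8$ and $M=\omega_{ii}$, the exponent becomes at least
\[
\frac{l\epsilon^2\omega_{ii}^2}{\omega_{ii}^2/4+\frac23\epsilon\omega_{ii}^2}=\frac{l\epsilon^2}{\frac14+\frac{2\epsilon}{3}}.
\]
Requiring $2\exp(\cdot)\le\delta$ gives $l\ge(\frac{1}{4\epsilon^2}+\frac{2}{3\epsilon})\log\frac2\delta$, which is exactly the stated choice of $l$. This yields $|\widebar{\boldsymbol{\omega}}[i]-\omega_{ii}|\le\epsilon\omega_{ii}$ with probability at least $1-\delta$.

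The main obstacle is bookkeeping rather than any deep step: we need to use the precise version of the Chernoff/Bernstein inequality whose constants reproduce $\frac14$ and $\frac23$, with variance term $2\,{\rm Var}$ and range term $\frac{2}{3}M\epsilon$. We also need to confirm that the deviation bound $M\le\omega_{ii}$ holds uniformly, and in particular in the case $\widebar{\omega}_{ii}=\frac{2}{1+d_i}$, where we use $\omega_{ii}\ge\frac{1}{1+d_i}$.
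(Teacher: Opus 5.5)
Your proposal is correct and follows the same route as the paper: the Bernstein-type Chernoff bound of Lemma~\ref{le-chernoff} applied to the two-valued per-forest estimators, with the variance ratio $1/8$ from Lemma~\ref{le-varbound} and the range bound $M=\frac{1}{1+d_i}$. Your choice $M=\omega_{ii}$ is just the paper's $M=\frac{1}{1+d_i}$ combined with $(1+d_i)\omega_{ii}\ge 1$, and you have that inequality in the correct direction (the appendix writes $(1+d_i)\omega_{ii}\leq 1$, which is a typo).
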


Based on Theorem~\ref{alg-rfwithvar+}, when we fix the failure probability $\delta$ and the relative error parameter $\epsilon$,  the required number of samples remains invariant regardless of the size and structure of the graph. This theoretical insight highlights the superiority of the estimator $\widebar{\omega}$ over both $\widehat{\omega}$ and $\widetilde{\omega}$.

\section{Experiments}
% In this section, we conduct extensive experiments on various real-life networks in order to evaluate the performance of our three algorithms \textsc{SCF}, \textsc{SCFV}, and \textsc{SCFV+}, in terms of effectiveness and efficiency. 

%Our source code is publicly available on \url{https://anonymous.4open.science/r/Diagonal-of-Forest-Matrix}.
\subsection{Setup}
 \textbf{Datasets and Equipment.}
 The datasets of selected real networks are publicly available in the KONECT~\cite{Ku13} and SNAP~\cite{LeSo16}. Our experiments are conducted on a diverse range of both undirected and directed networks, including  social networks, technical networks, and so on. For these datasets, the number $n$ of nodes ranges from about 16 thousand to 33 million, and the number $m$ of directed edges ranges from about 25 thousand to 301 million. The details of these datasets are presented in Table~\ref{datasets}. All experiments are conducted using the Julia programming language on a single-threaded setup. We conduct all experiments in a computational environment featuring a 4.2 GHz Intel i7-7700 CPU with 64GB of primary memory.

%The source code is publicly available on \url{https://github.com/OpinionOptimization/Opinion}.
\begin{table}[b]\fontsize{8}{11}
\caption{Datasets used in experiments. }
\begin{tabular}{cccc}
\hline
\textbf{Type}                                                                & \textbf{Dataset}    & $n$        & $m$         \\ \hline
\multirow{9}{*}{\begin{tabular}[c]{@{}c@{}}undirected\\ graphs\end{tabular}} & web-webbase-2001    & 16,062     & 25,593      \\
& soc-gemsec-RO       & 41,773     & 125,826     \\& tech-p2p-gnutella   & 62,561     & 147,878     \\& tech-RL-caida       & 190,914    & 607,610     \\& soc-twitter-follows & 404,719    & 713,319     \\& soc-delicious       & 536,108    & 1,375,961   \\& dblp                & 5,624,219  & 12,282,055  \\& livejournal         & 7,489,073  & 112,307,315 \\& delicious           & 33,777,767 & 301,183,342 \\ \hline
\multirow{9}{*}{\begin{tabular}[c]{@{}c@{}}directed\\ graphs\end{tabular}}  & wikipedialinks        & 17,649     & 296,918     \\ & p2p-gnutella31      & 62,586     & 147,892     \\
& email-euall         & 265,009    & 418,956     \\& web-Stanford        & 281,903    & 2,312,500   \\& web-Google          & 875,713    & 5,105,039 
   \\& northwestUSA        & 1,207,945   & 2,820,774    \\& wikitalk            & 2,394,385 
  & 5,021,410    \\& greatlakes          & 2,758,119 
  & 6,794,808    \\& fullUSA             & 23,947,347 & 57,708,624 \\ \hline
\end{tabular}\label{datasets}
\end{table}
\noindent\textbf{Algorithms.}
For evaluating the performance of our algorithms estimating  the diagonal elements of the forest matrix, we compare our three proposed algorithms \textsc{SCF}, \textsc{SCFV}, and \textsc{SCFV+} with two state-of-the-art algorithms, namely \textsc{JLT} \cite{JiBaZh19} and \textsc{UST} \cite{GrAnPrMe21}.  While the two state-of-the-art algorithms, \textsc{JLT} and \textsc{UST}, are confined to undirected graphs due to the limitations of the Laplacian solver they employ, our proposed algorithms \textsc{SCF}, \textsc{SCFV}, and \textsc{SCFV+}   are applicable to both undirected graphs and digraphs.
% The \textsc{JLT} algorithm combines the Johnson-Lindenstrauss lemma \cite{JoLi84,Ac03} with the  fast Laplacian solver \cite{CoKyMiPaJaPeRaXu14}. 
%  \textsc{JLT} needs  time ofm$O(m\epsilon^{-2}\log^{2.5}n\log\frac{1}{\epsilon}{\rm polyloglog}(n))$ to achieve a relative error bound. 
% On the other hand, the \textsc{UST} algorithm employs a single instance of a Laplacian solver and utilizes Wilson's algorithm to sample uniform spanning trees, thereby deriving the effective resistance. The author in \cite{GrAnPrMe21} elucidates the correlation between the diagonal elements of the forest matrix and the effective resistance in the augmented graph. The \textsc{UST} algorithm incurs a total time complexity of \(\widetilde{O}(m\epsilon^{-2}\log^{3/2}n)\) to guarantee an absolute error of \(\epsilon\) with high probability.

\subsection{Experiments on Undirected Graphs}
In this subsection, an analysis is conducted to compare our proposed algorithms \textsc{SCF}, \textsc{SCFV}, and \textsc{SCFV+} and two state-of-the-art algorithms \textsc{JLT} and \textsc{UST}, on undirected graphs. 

We first evaluate the accuracy of the algorithms. To this end, we perform experiments on six small or medium-sized networks: web-webbase-2001, soc-gemsec-RO, tech-p2p-gnutella, tech-RL-caida, soc-twitter-follows, and soc-delicious.
For first three graphs, the ground truth of the diagonal elements of $\mathbf{\Omega}$ is computed by directly inverting the matrix $\II+\LL$. For the last three graphs, the Conjugate Gradient  solver with a tolerance of $10^{-9}$ is utilized to obtain the diagonal, since direct matrix inversion is computationally infeasible. This approach aligns with the settings in~\cite{GrAnPrMe21}.

%For ease of discussion, we employ lowercase letters a through f to denote the six networks: web-webbase-2001(a), soc-gemsec-RO(b), tech-p2p-gnutella(c), tech-RL-caida(d), soc-twitter-follows(e), and soc-delicious(f). 

Two metrics are used to evaluate the accuracy: average relative error for all nodes and maximum relative errors among all nodes. In addition, there are two parameters in the five considered algorithms: the sampling number $l$ in \textsc{UST}, \textsc{SCF}, \textsc{SCFV}, and \textsc{SCFV+}, and the dimension $k$ for the Johnson-Lindenstrauss  lemma in \textsc{JLT}. We set $l$  to 500 for the four sampling-based algorithms, and set $k$  to 50 for \textsc{JLT}. The results for these settings are reported in Figure~\ref{f1}.

\begin{figure}[htbp!]
	\centering
	\includegraphics[width=.95\columnwidth]{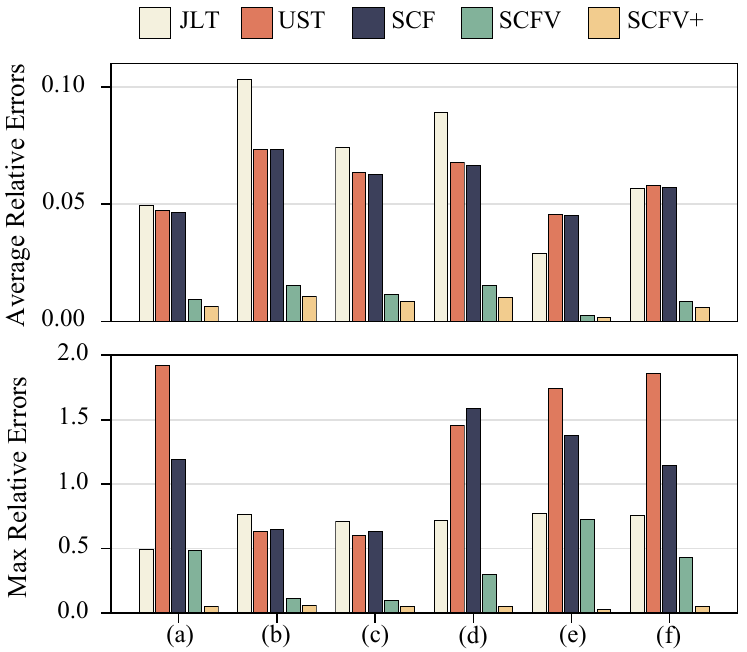}
	\caption{Comparison of maximum  and average   relative errors for five algorithms on six undirected graphs: web-webbase-2001 (a), soc-gemsec-RO (b), tech-p2p-gnutella (c), tech-RL-caida (d), soc-twitter-follows (e), and soc-delicious (f). }\label{f1}	
\end{figure}

The upper panel of Figure~\ref{f1} depicts the average relative errors for the five algorithms on six networks, while the lower panel illustrates the maximum relative errors. With regard to the average relative error, it is evident that \textsc{JLT}, \textsc{UST}, and \textsc{SCF} yield similar results. In contrast, \textsc{SCFV} and \textsc{SCFV+} achieve nearly 10 times better accuracy  compared to \textsc{UST} and \textsc{SCF}, with \textsc{SCFV+} securing the best result due to its minimized variance. As for the maximum relative error, it is noteworthy that, \textsc{SCFV+} consistently delivers stable and satisfactory results across the six graphs. The other algorithms, to varying degrees, yield results that may be deemed less accurate.

We continue  to delve deeper into the effectiveness and efficiency of the five algorithms with varying parameters. For this purpose, we vary  parameters $l$ and $k$, and observe how the maximum and average errors change with these parameters. In our experiments, we set \(l = 500, 1000, 2000\) and \(k = 50, 100, 150\), and report  their resultant effects in Figure~\ref{f2}, which are displayed by scatter plots.

% \begin{figure}[htbp!]
% 	\centering
% 	\includegraphics[width=1\columnwidth]{dotaveragesmall}
% 	\caption{Scatter plot of average relative errors over time for five algorithms on six undirected graphs, considering three parameters each}\label{f2}	
% \end{figure}

\begin{figure}[htbp!]
    \centering
    \begin{minipage}[]{1\columnwidth}
        \includegraphics[width=1.0\linewidth]{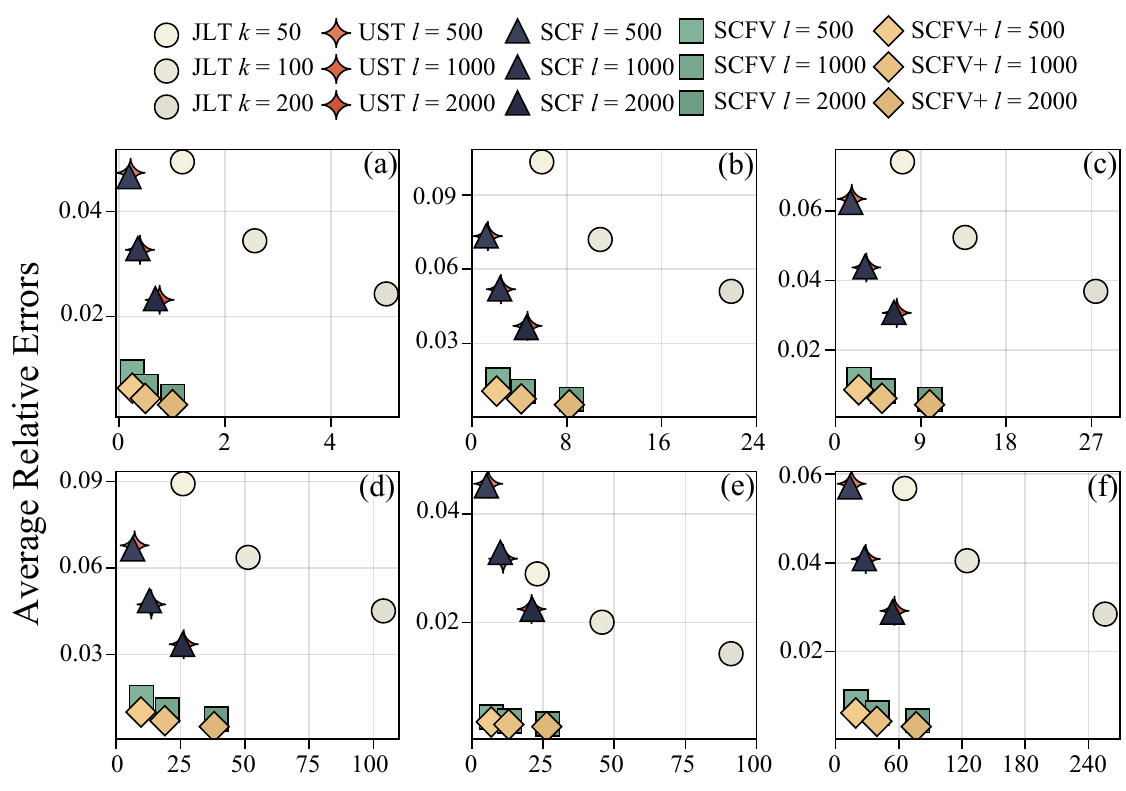}
    \end{minipage}    
 
    \begin{minipage}[]{1\columnwidth}
        \includegraphics[width=1\linewidth]{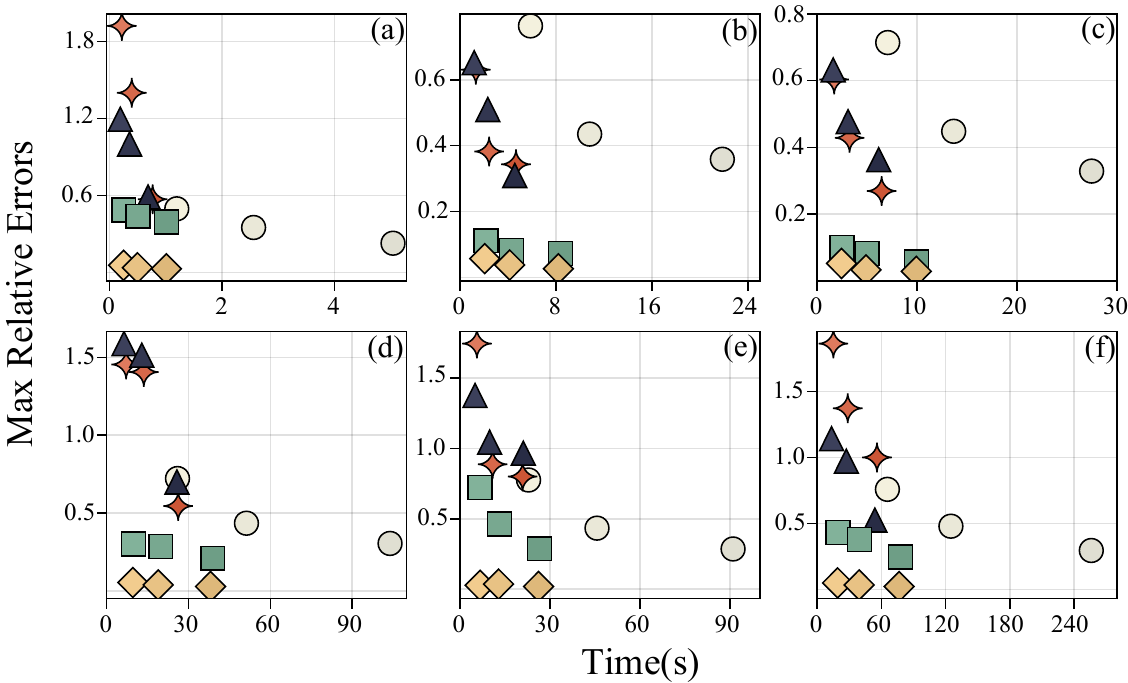}
    \end{minipage}   
    \caption{Scatter plot of maximum  and average  relative errors over time for five algorithms on six undirected graphs, considering three parameters each.}
    \label{f2}	
\end{figure}

From Figure~\ref{f2} we can see that as  parameters $l$ and $k$ increase, all algorithms require more time but produce better results. Among the five algorithms, JLT shows the least ideal performance because its time requirement increases rapidly and its error is unsatisfactory regardless of the average relative error or the maximum relative error.

%, whether considering average relative errors or maximum relative errors. 

%The remaining four algorithms all rely on sampling using Wilson's algorithm. If we fix the number of samples, \textsc{SCF} operates slightly faster than \textsc{UST}, since \textsc{UST} requires performing one instance of a fast Laplacian solver while \textsc{SCF} does not. Moreover, \textsc{SCF} achieves results comparable to \textsc{UST}. When the  sample number is identical,  \textsc{SCFV} and \textsc{SCFV+} require slightly more time than \textsc{UST} and \textsc{SCF}, while the results returned by them are superior. 

The remaining four algorithms all involve sampling using Wilson's algorithm. If we fix the number of samples, \textsc{SCF} runs slightly faster than \textsc{UST} because \textsc{UST} requires a single execution of the fast Laplacian solver while \textsc{SCF} does not. Despite the shorter running time, \textsc{SCF} achieves comparable results to \textsc{UST}. Under the condition that the number of forest samplings is identical , \textsc{SCFV} and \textsc{SCFV+} require slightly longer running time than \textsc{UST} and \textsc{SCF}, but the results returned by \textsc{SCFV} and \textsc{SCFV+} are significantly better than the corresponding results of \textsc{UST} and \textsc{SCF}.

% \begin{figure}[tbp!]
% 	\centering
% 	\includegraphics[width=1\columnwidth]{dotmax}
% 	\caption{Scatter plot of maximum relative errors over time for five algorithms on six undirected graphs, considering three parameters each}\label{f3}	
% \end{figure}

It is obvious from Figure~\ref{f2} that among the five algorithms, \textsc{SCFV+} has the best performance. Setting \(l=500\) in \textsc{SCFV+} can achieve lower errors and faster speeds than the other three sampling algorithms with \(l=2000\) and \textsc{JLT} with \(k=200\). Thus, \textsc{SCFV+} outperforms the other four algorithms, in terms of the average relative error and the maximum relative error, and the running time, when the parameters are chosen properly.

%whether considering average relative errors or maximum relative errors, and demonstrates stability and satisfactory results.

%In the subsequent analysis, we focus on three larger undirected graphs: dblp, livejournal, and delicious. As outlined in Table~\ref{datasets}, these graphs are notably substantial, each featuring over 5 million nodes and surpassing 12 million edges. Given time and storage constraints, obtaining the ground truth answer is unfeasible. Similarly,  algorithms \textsc{JLT} and \textsc{UST} fail to run on our equipment due to the time and storage limitations associated with the Laplacian solver, which is utilized in both \textsc{JLT} and \textsc{UST}. Consequently, we perform our three sampling algorithms \textsc{SCF}, \textsc{SCFV} and \textsc{SCFV+}, recording the running time, with the results tabulated in Table~\ref{tb2}. 

We finally present experimental results on three large undirected graphs: dblp, livejournal, and delicious, each of which has over 5 million nodes and over 12 million edges.  Due to time and storage constraints, obtaining the ground truth answer for these three graphs is infeasible.  Similarly, algorithms \textsc{JLT} and \textsc{UST} fail to run, because of the large time and memory requirements for running the Laplacian solver recalled by \textsc{JLT} and \textsc{UST}. Thus, we only perform our three sampling algorithms \textsc{SCF}, \textsc{SCFV} and \textsc{SCFV+}, record their running time, and report the results in Table~\ref{tb2}.
The results indicate that our three algorithms show good scalability, which  are suitable for massive networks with more than ten million nodes.

\begin{table}[htbp!]\fontsize{8}{11}
\setlength{\tabcolsep}{1.2mm} 
\caption{Running time of \textsc{SCF}, \textsc{SCFV} and \textsc{SCFV+} on three large undirected networks. }
\begin{tabular}{ccccccccc}
\hline
\multirow{3}{*}{\textbf{Graph}} &  & \multicolumn{7}{c}{\textbf{Time (seconds)}}                          \\ \cline{3-9} 
  &  & \multicolumn{3}{c}{$l=500$} &  & \multicolumn{3}{c}{$l=1000$} \\ \cline{3-5} \cline{7-9} 
 &  & \textsc{SCF}     & \textsc{SCFV}    & \textsc{SCFV+}   &  & \textsc{SCF}     & \textsc{SCFV}    & \textsc{SCFV+}    \\ \hline
dblp                             &  & 377     & 512     & 518     &  & 765     & 1034    & 1060     \\
livejournal                      &  & 321     & 508     & 541     &  & 669     & 1030    & 1051     \\
delicious                        &  & 1649    & 2437    & 2513    &  & 3262    & 4872    & 4898     \\ \hline
\end{tabular}\label{tb2}
\end{table}

%An examination of Table~\ref{tb2} indicates that our three algorithms exhibit admirable scalability, performing adeptly on the three expansive networks. 

%\section{Leader-Follower Opinion Dynamics Model and Interpretation  in Terms of  Spanning  Forests}

\subsection{Experiments on Digraphs}
We now present the results of experiments on  directed networks, aiming to evaluate the three proposed algorithms: \textsc{SCF}, \textsc{SCFV}, and \textsc{SCFV+}. Since \textsc{JLT} and \textsc{UST} do not apply to digraphs, we  exclude them from comparison.
 For the determination of the ground truth of the diagonal elements of $\mathbf{\Omega}$, the GMRES algorithm~\cite{SaSc86} is employed with a tolerance set to \(10^{-9}\).
\begin{figure}[htbp!]
	\centering
	\includegraphics[width=1\columnwidth]{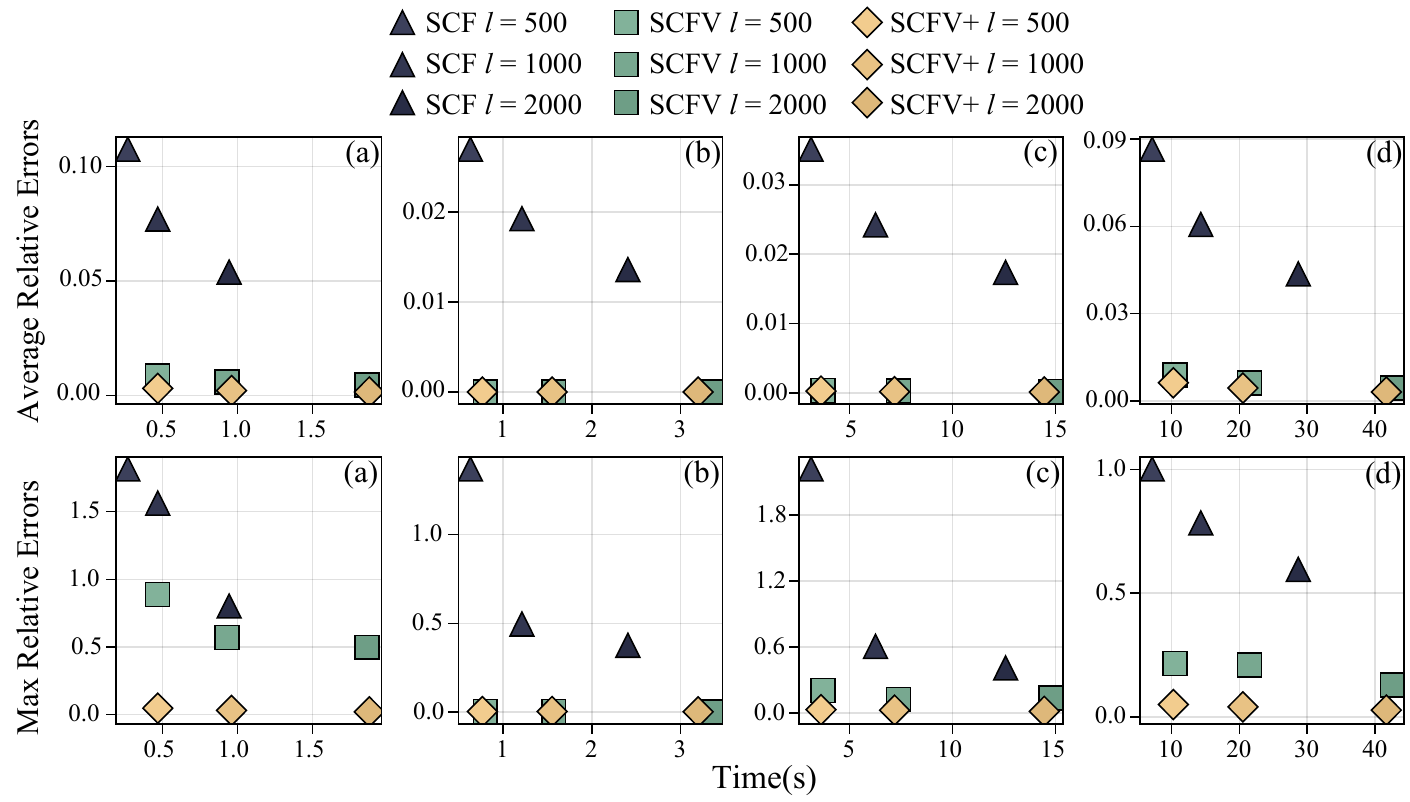}
	\caption{Scatter plot of maximum  and average  relative errors over time for three algorithms on four digraphs: wikipedialinks (a), p2p-gnutella31 (b), email-euall (c), and web-Stanford (d).}\label{f4}	
\end{figure}

We first demonstrate the results for four small or medium-sized digraphs in Figure~\ref{f4}.
 We can see that  for the three algorithms \textsc{SCF}, \textsc{SCFV}, and \textsc{SCFV+}, their errors  decrease as $l$ increases.
Moreover,  these three algorithms exhibit progressive results, but \textsc{SCFV+} outperforms the other two in both average and maximum relative errors. While both \textsc{SCFV}, and \textsc{SCFV+} yield satisfactory results in terms of average relative errors, only \textsc{SCFV+} demonstrates reliable  performance for maximum relative errors.  This is consistent with the theoretical results in Lemmas~\ref{le-var-omega} and~\ref{le-var-omega-new}.   %that the variance of \textsc{SCFV+} excludes the cross-product term present in the variance of \textsc{SCFV}, as outlined . 
Notably, \textsc{SCFV+} with $l=500$ ourperforms \textsc{SCF} and \textsc{SCFV} with $l=2000$, aligning with experimental results obtained for undirected graphs.

We next present experimental results  on five large digraphs. For these digraphs, we cannot obtain the ground truth for the diagonal, due to the time and storage limitations. In contrast, our three proposed algorithms still work for these digraphs, and demonstrate commendable scalability and efficacy. For example, for network fullUSA with more than 23 million nodes and 57 million edges,  our three algorithms return results within  12 minutes for $l=500$.

\begin{table}[htbp!]\fontsize{8}{11}\setlength{\tabcolsep}{1.2mm} 
\caption{Running time of \textsc{SCF}, \textsc{SCFV} and \textsc{SCFV+} on five large directed networks. }
\begin{tabular}{ccccccccc}
\hline
\multirow{3}{*}{\textbf{Graph}} &  & \multicolumn{7}{c}{\textbf{Time (seconds)}}                          \\ \cline{3-9} 
                                 &  & \multicolumn{3}{c}{$l=500$} &  & \multicolumn{3}{c}{$l=1000$} \\ \cline{3-5} \cline{7-9} 
                                 &  & \textsc{SCF}    & \textsc{SCFV}    & \textsc{SCFV+}    &  & \textsc{SCF}     & \textsc{SCFV}    & \textsc{SCFV+}    \\ \hline
web-Google                       &  & 24     & 33      & 32       &  & 48      & 67      & 66       \\
northwestUSA                     &  & 27     & 36      & 34       &  & 54      & 72      & 70       \\
wikitalk                         &  & 18     & 20      & 20       &  & 35      & 39      & 38       \\
greatlakes                       &  & 63     & 84      & 82       &  & 126     & 168     & 163      \\
fullUSA                          &  & 545    & 731     & 710      &  & 1085    & 1461    & 1417     \\ \hline
\end{tabular}\label{tb3}
\end{table}

\section{Conclusions}

In this paper, we addressed the problem of efficiently computing the diagonal of the forest matrix in digraphs. We proposed three novel sampling-based algorithms: \textsc{SCF}, \textsc{SCFV}, and \textsc{SCFV+}.  \textsc{SCF} utilizes an extension of Wilson's algorithm, based on a probabilistic interpretation of the diagonal of the forest matrix. \textsc{SCFV} is inspired by the FJ model, and refines \textsc{SCF} by reducing the variance in forest sampling through the matrix-vector iteration. \textsc{SCFV+} further reduces the variance using a novel iteration equation. Notably, \textsc{SCFV+} achieves a relative error guarantee with high probability and maintains a linear time complexity relative to the number of nodes, presenting a superior theoretical result compared to existing algorithms.

We conducted extensive experiments on various real-world networks. The results show that our algorithms demonstrate  superior effectiveness and efficiency compared to the state-of-the-art algorithms in undirected graphs. While state-of-the-art algorithms fail for digraphs,  our algorithms consistently perform well. Moreover, our algorithms are scalable to massive graphs with over thirty million nodes. In  future, we plan to extend or improve our algorithms to sign graphs or temporal graphs.
    
\begin{acks}
The work was supported by  the National Natural Science Foundation of China (Nos.  62372112 and U20B2051).
\end{acks}

\bibliographystyle{ACM-Reference-Format}
\balance
\bibliography{expressedopinion}

%%% -*-BibTeX-*-
%%% Do NOT edit. File created by BibTeX with style
%%% ACM-Reference-Format-Journals [18-Jan-2012].

\providecommand{\noopsort}[1]{}\providecommand{\singleletter}[1]{#1}%\providecommand{\noopsort}[1]{}\providecommand{\singleletter}[1]{#1}%\providecommand{\noopsort}[1]{}\providecommand{\singleletter}[1]{#1}%\providecommand{\noopsort}[1]{}\providecommand{\singleletter}[1]{#1}%
\begin{thebibliography}{48}

%%% ====================================================================
%%% NOTE TO THE USER: you can override these defaults by providing
%%% customized versions of any of these macros before the \bibliography
%%% command.  Each of them MUST provide its own final punctuation,
%%% except for \shownote{}, \showDOI{}, and \showURL{}.  The latter two
%%% do not use final punctuation, in order to avoid confusing it with
%%% the Web address.
%%%
%%% To suppress output of a particular field, define its macro to expand
%%% to an empty string, or better, \unskip, like this:
%%%
%%% \newcommand{\showDOI}[1]{\unskip}   % LaTeX syntax
%%%
%%% \def \showDOI #1{\unskip}           % plain TeX syntax
%%%
%%% ====================================================================

\ifx \showCODEN    \undefined \def \showCODEN     #1{\unskip}     \fi
\ifx \showDOI      \undefined \def \showDOI       #1{#1}\fi
\ifx \showISBNx    \undefined \def \showISBNx     #1{\unskip}     \fi
\ifx \showISBNxiii \undefined \def \showISBNxiii  #1{\unskip}     \fi
\ifx \showISSN     \undefined \def \showISSN      #1{\unskip}     \fi
\ifx \showLCCN     \undefined \def \showLCCN      #1{\unskip}     \fi
\ifx \shownote     \undefined \def \shownote      #1{#1}          \fi
\ifx \showarticletitle \undefined \def \showarticletitle #1{#1}   \fi
\ifx \showURL      \undefined \def \showURL       {\relax}        \fi
% The following commands are used for tagged output and should be
% invisible to TeX
\providecommand\bibfield[2]{#2}
\providecommand\bibinfo[2]{#2}
\providecommand\natexlab[1]{#1}
\providecommand\showeprint[2][]{arXiv:#2}

\bibitem[\protect\citeauthoryear{Achlioptas}{Achlioptas}{2003}]%
        {Ac03}
\bibfield{author}{\bibinfo{person}{Dimitris Achlioptas}.}
  \bibinfo{year}{2003}\natexlab{}.
\newblock \showarticletitle{{Database-friendly random projections:
  Johnson-Lindenstrauss with binary coins}}.
\newblock \bibinfo{journal}{\emph{J. Comput. System Sci.}}
  \bibinfo{volume}{66}, \bibinfo{number}{4} (\bibinfo{year}{2003}),
  \bibinfo{pages}{671--687}.
\newblock


\bibitem[\protect\citeauthoryear{Agaev and Chebotarev}{Agaev and
  Chebotarev}{2001}]%
        {AgCh01}
\bibfield{author}{\bibinfo{person}{Rafig~Pashaevich Agaev} {and}
  \bibinfo{person}{P~Yu Chebotarev}.} \bibinfo{year}{2001}\natexlab{}.
\newblock \showarticletitle{Spanning forests of a digraph and their
  applications}.
\newblock \bibinfo{journal}{\emph{Automation and Remote Control}}
  \bibinfo{volume}{62}, \bibinfo{number}{3} (\bibinfo{year}{2001}),
  \bibinfo{pages}{443--466}.
\newblock


\bibitem[\protect\citeauthoryear{Avena, Castell, Gaudilli{\`e}re, and
  M{\'e}lot}{Avena et~al\mbox{.}}{2018}]%
        {AvCaGaMe18}
\bibfield{author}{\bibinfo{person}{Luca Avena}, \bibinfo{person}{Fabienne
  Castell}, \bibinfo{person}{Alexandre Gaudilli{\`e}re}, {and}
  \bibinfo{person}{Clothilde M{\'e}lot}.} \bibinfo{year}{2018}\natexlab{}.
\newblock \showarticletitle{Random forests and networks analysis}.
\newblock \bibinfo{journal}{\emph{Journal of Statistical Physics}}
  \bibinfo{volume}{173} (\bibinfo{year}{2018}), \bibinfo{pages}{985--1027}.
\newblock


\bibitem[\protect\citeauthoryear{Avena and Gaudilli{\`e}re}{Avena and
  Gaudilli{\`e}re}{2018}]%
        {AvLuGaAl18}
\bibfield{author}{\bibinfo{person}{Luca Avena} {and} \bibinfo{person}{Alexandre
  Gaudilli{\`e}re}.} \bibinfo{year}{2018}\natexlab{}.
\newblock \showarticletitle{Two applications of random spanning forests}.
\newblock \bibinfo{journal}{\emph{Journal of Theoretical Probability}}
  \bibinfo{volume}{31}, \bibinfo{number}{4} (\bibinfo{year}{2018}),
  \bibinfo{pages}{1975--2004}.
\newblock


\bibitem[\protect\citeauthoryear{Bao and Zhang}{Bao and Zhang}{2022}]%
        {BaZh22}
\bibfield{author}{\bibinfo{person}{Qi Bao} {and} \bibinfo{person}{Zhongzhi
  Zhang}.} \bibinfo{year}{2022}\natexlab{}.
\newblock \showarticletitle{Discriminating power of centrality measures in
  complex networks}.
\newblock \bibinfo{journal}{\emph{IEEE Transactions on Cybernetics}}
  \bibinfo{volume}{52}, \bibinfo{number}{11} (\bibinfo{year}{2022}),
  \bibinfo{pages}{12583--12593}.
\newblock


\bibitem[\protect\citeauthoryear{Bavelas}{Bavelas}{1948}]%
        {Ba48}
\bibfield{author}{\bibinfo{person}{Alex Bavelas}.}
  \bibinfo{year}{1948}\natexlab{}.
\newblock \showarticletitle{A mathematical model for group structures}.
\newblock \bibinfo{journal}{\emph{Human organization}} \bibinfo{volume}{7},
  \bibinfo{number}{3} (\bibinfo{year}{1948}), \bibinfo{pages}{16--30}.
\newblock


\bibitem[\protect\citeauthoryear{Bavelas}{Bavelas}{1950}]%
        {Ba50}
\bibfield{author}{\bibinfo{person}{Alex Bavelas}.}
  \bibinfo{year}{1950}\natexlab{}.
\newblock \showarticletitle{Communication patterns in task-oriented groups}.
\newblock \bibinfo{journal}{\emph{The Journal of the Acoustical Society of
  America}} \bibinfo{volume}{22}, \bibinfo{number}{6} (\bibinfo{year}{1950}),
  \bibinfo{pages}{725--730}.
\newblock


\bibitem[\protect\citeauthoryear{Bindel, Kleinberg, and Oren}{Bindel
  et~al\mbox{.}}{2015}]%
        {BiKlOr15}
\bibfield{author}{\bibinfo{person}{David Bindel}, \bibinfo{person}{Jon
  Kleinberg}, {and} \bibinfo{person}{Sigal Oren}.}
  \bibinfo{year}{2015}\natexlab{}.
\newblock \showarticletitle{How bad is forming your own opinion?}
\newblock \bibinfo{journal}{\emph{Games and Economic Behavior}}
  \bibinfo{volume}{92} (\bibinfo{year}{2015}), \bibinfo{pages}{248--265}.
\newblock


\bibitem[\protect\citeauthoryear{Chaiken}{Chaiken}{1982}]%
        {Ch82}
\bibfield{author}{\bibinfo{person}{Seth Chaiken}.}
  \bibinfo{year}{1982}\natexlab{}.
\newblock \showarticletitle{A combinatorial proof of the all minors matrix tree
  theorem}.
\newblock \bibinfo{journal}{\emph{SIAM J. Alg. Disc. Meth.}}
  \bibinfo{volume}{3}, \bibinfo{number}{3} (\bibinfo{date}{Sep.}
  \bibinfo{year}{1982}), \bibinfo{pages}{319--329}.
\newblock


\bibitem[\protect\citeauthoryear{Chartrand, Schultz, and Winters}{Chartrand
  et~al\mbox{.}}{1996}]%
        {ChSchWe96}
\bibfield{author}{\bibinfo{person}{Gary Chartrand}, \bibinfo{person}{Michelle
  Schultz}, {and} \bibinfo{person}{Steven~J Winters}.}
  \bibinfo{year}{1996}\natexlab{}.
\newblock \showarticletitle{On eccentric vertices in graphs}.
\newblock \bibinfo{journal}{\emph{Networks}} \bibinfo{volume}{28},
  \bibinfo{number}{4} (\bibinfo{year}{1996}), \bibinfo{pages}{181--186}.
\newblock


\bibitem[\protect\citeauthoryear{Chebotarev and Agaev}{Chebotarev and
  Agaev}{2002}]%
        {ChAg02}
\bibfield{author}{\bibinfo{person}{Pavel Chebotarev} {and}
  \bibinfo{person}{Rafig Agaev}.} \bibinfo{year}{2002}\natexlab{}.
\newblock \showarticletitle{Forest matrices around the Laplacian matrix}.
\newblock \bibinfo{journal}{\emph{Linear Algebra Appl.}} \bibinfo{volume}{356},
  \bibinfo{number}{1-3} (\bibinfo{year}{2002}), \bibinfo{pages}{253--274}.
\newblock


\bibitem[\protect\citeauthoryear{Chebotarev and Shamis}{Chebotarev and
  Shamis}{2006}]%
        {ChSh06}
\bibfield{author}{\bibinfo{person}{Pavel~Yu. Chebotarev} {and}
  \bibinfo{person}{Elena Shamis}.} \bibinfo{year}{2006}\natexlab{}.
\newblock \showarticletitle{Matrix-Forest Theorems}.
\newblock \bibinfo{journal}{\emph{ArXiv}}  \bibinfo{volume}{abs/math/0602575}
  (\bibinfo{year}{2006}).
\newblock


\bibitem[\protect\citeauthoryear{Chebotarev and Shamis}{Chebotarev and
  Shamis}{1997}]%
        {ChSh97}
\bibfield{author}{\bibinfo{person}{P.~Yu Chebotarev} {and}
  \bibinfo{person}{E.~V. Shamis}.} \bibinfo{year}{1997}\natexlab{}.
\newblock \showarticletitle{The matrix-forest theorem and measuring relations
  in small social groups}.
\newblock \bibinfo{journal}{\emph{Automation and Remote Control}}
  \bibinfo{volume}{58}, \bibinfo{number}{9} (\bibinfo{year}{1997}),
  \bibinfo{pages}{1505--1514}.
\newblock


\bibitem[\protect\citeauthoryear{Chebotarev and Shamis}{Chebotarev and
  Shamis}{1998}]%
        {ChSh98}
\bibfield{author}{\bibinfo{person}{P.~Yu Chebotarev} {and}
  \bibinfo{person}{E.~V. Shamis}.} \bibinfo{year}{1998}\natexlab{}.
\newblock \showarticletitle{On proximity measures for graph vertices}.
\newblock \bibinfo{journal}{\emph{Automation and Remote Control}}
  \bibinfo{volume}{59}, \bibinfo{number}{10} (\bibinfo{year}{1998}),
  \bibinfo{pages}{1443--1459}.
\newblock


\bibitem[\protect\citeauthoryear{Chung and Lu}{Chung and Lu}{2006}]%
        {ChLu06}
\bibfield{author}{\bibinfo{person}{Fan Chung} {and} \bibinfo{person}{Linyuan
  Lu}.} \bibinfo{year}{2006}\natexlab{}.
\newblock \showarticletitle{Concentration inequalities and martingale
  inequalities: a survey}.
\newblock \bibinfo{journal}{\emph{Internet mathematics}} \bibinfo{volume}{3},
  \bibinfo{number}{1} (\bibinfo{year}{2006}), \bibinfo{pages}{79--127}.
\newblock


\bibitem[\protect\citeauthoryear{Cohen, Kyng, Miller, Pachocki, Peng, Rao, and
  Xu}{Cohen et~al\mbox{.}}{2014}]%
        {CoKyMiPaJaPeRaXu14}
\bibfield{author}{\bibinfo{person}{Michael~B Cohen}, \bibinfo{person}{Rasmus
  Kyng}, \bibinfo{person}{Gary~L Miller}, \bibinfo{person}{Jakub~W Pachocki},
  \bibinfo{person}{Richard Peng}, \bibinfo{person}{Anup~B Rao}, {and}
  \bibinfo{person}{Shen~Chen Xu}.} \bibinfo{year}{2014}\natexlab{}.
\newblock \showarticletitle{Solving {SDD} linear systems in nearly $m
  \log^{1/2} n$ time}. In \bibinfo{booktitle}{\emph{Proceedings of the
  Forty-Sixth Annual ACM Symposium on Theory of Computing}}. ACM,
  \bibinfo{pages}{343--352}.
\newblock


\bibitem[\protect\citeauthoryear{Daly and Haahr}{Daly and Haahr}{2007}]%
        {DaHa07}
\bibfield{author}{\bibinfo{person}{Elizabeth~M Daly} {and}
  \bibinfo{person}{Mads Haahr}.} \bibinfo{year}{2007}\natexlab{}.
\newblock \showarticletitle{Social network analysis for routing in disconnected
  delay-tolerant manets}. In \bibinfo{booktitle}{\emph{Proceedings of the 8th
  ACM International Symposium on Mobile Ad hoc Networking and Computing}}. ACM,
  \bibinfo{pages}{32--40}.
\newblock


\bibitem[\protect\citeauthoryear{Freeman}{Freeman}{1977}]%
        {Fr77}
\bibfield{author}{\bibinfo{person}{Linton~C Freeman}.}
  \bibinfo{year}{1977}\natexlab{}.
\newblock \showarticletitle{A set of measures of centrality based on
  betweenness}.
\newblock \bibinfo{journal}{\emph{Sociometry}} (\bibinfo{year}{1977}),
  \bibinfo{pages}{35--41}.
\newblock


\bibitem[\protect\citeauthoryear{Friedkin and Johnsen}{Friedkin and
  Johnsen}{1990}]%
        {FrJo90}
\bibfield{author}{\bibinfo{person}{Noah~E Friedkin} {and}
  \bibinfo{person}{Eugene~C Johnsen}.} \bibinfo{year}{1990}\natexlab{}.
\newblock \showarticletitle{Social influence and opinions}.
\newblock \bibinfo{journal}{\emph{Journal of Mathematical Sociology}}
  \bibinfo{volume}{15}, \bibinfo{number}{3-4} (\bibinfo{year}{1990}),
  \bibinfo{pages}{193--206}.
\newblock


\bibitem[\protect\citeauthoryear{Gionis, Terzi, and Tsaparas}{Gionis
  et~al\mbox{.}}{2013}]%
        {GiTeTs13}
\bibfield{author}{\bibinfo{person}{Aristides Gionis}, \bibinfo{person}{Evimaria
  Terzi}, {and} \bibinfo{person}{Panayiotis Tsaparas}.}
  \bibinfo{year}{2013}\natexlab{}.
\newblock \showarticletitle{Opinion maximization in social networks}. In
  \bibinfo{booktitle}{\emph{Proceedings of the 2013 SIAM International
  Conference on Data Mining}}. SIAM, \bibinfo{pages}{387--395}.
\newblock


\bibitem[\protect\citeauthoryear{Grando, Granville, and Lamb}{Grando
  et~al\mbox{.}}{2018}]%
        {GrGrLa18}
\bibfield{author}{\bibinfo{person}{Felipe Grando}, \bibinfo{person}{Lisandro~Z
  Granville}, {and} \bibinfo{person}{Luis~C Lamb}.}
  \bibinfo{year}{2018}\natexlab{}.
\newblock \showarticletitle{{Machine learning in network centrality measures:
  Tutorial and outlook}}.
\newblock \bibinfo{journal}{\emph{ACM Computing Surveys (CSUR)}}
  \bibinfo{volume}{51}, \bibinfo{number}{5} (\bibinfo{year}{2018}),
  \bibinfo{pages}{102}.
\newblock


\bibitem[\protect\citeauthoryear{Huang, Liao, and Wu}{Huang
  et~al\mbox{.}}{2018}]%
        {HuLiWu18}
\bibfield{author}{\bibinfo{person}{Lei Huang}, \bibinfo{person}{Li Liao}, {and}
  \bibinfo{person}{Cathy~H Wu}.} \bibinfo{year}{2018}\natexlab{}.
\newblock \showarticletitle{Completing sparse and disconnected protein-protein
  network by deep learning}.
\newblock \bibinfo{journal}{\emph{BMC Bioinformatics}} \bibinfo{volume}{19},
  \bibinfo{number}{1} (\bibinfo{year}{2018}), \bibinfo{pages}{103}.
\newblock


\bibitem[\protect\citeauthoryear{Jin, Bao, and Zhang}{Jin
  et~al\mbox{.}}{2019}]%
        {JiBaZh19}
\bibfield{author}{\bibinfo{person}{Yujia Jin}, \bibinfo{person}{Qi Bao}, {and}
  \bibinfo{person}{Zhongzhi Zhang}.} \bibinfo{year}{2019}\natexlab{}.
\newblock \showarticletitle{Forest distance closeness centrality in
  disconnected graphs}. In \bibinfo{booktitle}{\emph{2018 IEEE International
  Conference on Data Mining}}. IEEE, \bibinfo{pages}{339--348}.
\newblock


\bibitem[\protect\citeauthoryear{Johnson and Lindenstrauss}{Johnson and
  Lindenstrauss}{1984}]%
        {JoLi84}
\bibfield{author}{\bibinfo{person}{William~B Johnson} {and}
  \bibinfo{person}{Joram Lindenstrauss}.} \bibinfo{year}{1984}\natexlab{}.
\newblock \showarticletitle{{Extensions of Lipschitz mappings into a Hilbert
  space}}.
\newblock \bibinfo{journal}{\emph{Contemp. Math.}}  \bibinfo{volume}{26}
  (\bibinfo{year}{1984}), \bibinfo{pages}{189--206}.
\newblock


\bibitem[\protect\citeauthoryear{Kulesza, Taskar, et~al\mbox{.}}{Kulesza
  et~al\mbox{.}}{2012}]%
        {KuTa12}
\bibfield{author}{\bibinfo{person}{Alex Kulesza}, \bibinfo{person}{Ben Taskar},
  {et~al\mbox{.}}} \bibinfo{year}{2012}\natexlab{}.
\newblock \showarticletitle{Determinantal point processes for machine
  learning}.
\newblock \bibinfo{journal}{\emph{Foundations and Trends{\textregistered} in
  Machine Learning}} \bibinfo{volume}{5}, \bibinfo{number}{2--3}
  (\bibinfo{year}{2012}), \bibinfo{pages}{123--286}.
\newblock


\bibitem[\protect\citeauthoryear{Kunegis}{Kunegis}{2013}]%
        {Ku13}
\bibfield{author}{\bibinfo{person}{J{\'e}r{\^o}me Kunegis}.}
  \bibinfo{year}{2013}\natexlab{}.
\newblock \showarticletitle{Konect: the koblenz network collection}. In
  \bibinfo{booktitle}{\emph{Proceedings of the 22nd International World Wide
  Web Conference}}. ACM, \bibinfo{pages}{1343--1350}.
\newblock


\bibitem[\protect\citeauthoryear{Lawler and Gregory}{Lawler and
  Gregory}{1980}]%
        {La80}
\bibfield{author}{\bibinfo{person}{Lawler} {and} \bibinfo{person}{F. Gregory}.}
  \bibinfo{year}{1980}\natexlab{}.
\newblock \showarticletitle{A self-avoiding random walk}.
\newblock \bibinfo{journal}{\emph{Duke Mathematical Journal}}
  \bibinfo{volume}{47}, \bibinfo{number}{3} (\bibinfo{year}{1980}),
  \bibinfo{pages}{655--693}.
\newblock


\bibitem[\protect\citeauthoryear{Lawler}{Lawler}{1979}]%
        {LaFr79}
\bibfield{author}{\bibinfo{person}{Gregory~Francis Lawler}.}
  \bibinfo{year}{1979}\natexlab{}.
\newblock \emph{\bibinfo{title}{A self-avoiding random walk.}}
\newblock \bibinfo{thesistype}{Ph.D. Dissertation}. \bibinfo{school}{Princeton
  University}.
\newblock


\bibitem[\protect\citeauthoryear{Leskovec and Sosi{\v{c}}}{Leskovec and
  Sosi{\v{c}}}{2016}]%
        {LeSo16}
\bibfield{author}{\bibinfo{person}{Jure Leskovec} {and} \bibinfo{person}{Rok
  Sosi{\v{c}}}.} \bibinfo{year}{2016}\natexlab{}.
\newblock \showarticletitle{{SNAP: A} general-purpose network analysis and
  graph-mining library}.
\newblock \bibinfo{journal}{\emph{ACM Transactions on Intelligent Systems and
  Technology}} \bibinfo{volume}{8}, \bibinfo{number}{1} (\bibinfo{year}{2016}),
  \bibinfo{pages}{1}.
\newblock


\bibitem[\protect\citeauthoryear{Li, Peng, Shan, Yi, and Zhang}{Li
  et~al\mbox{.}}{2019}]%
        {LiPeShYiZh19}
\bibfield{author}{\bibinfo{person}{Huan Li}, \bibinfo{person}{Richard Peng},
  \bibinfo{person}{Liren Shan}, \bibinfo{person}{Yuhao Yi}, {and}
  \bibinfo{person}{Zhongzhi Zhang}.} \bibinfo{year}{2019}\natexlab{}.
\newblock \showarticletitle{Current flow group closeness centrality for complex
  networks}. In \bibinfo{booktitle}{\emph{Proceedings of World Wide Web
  Conference}}. ACM, \bibinfo{pages}{961--971}.
\newblock


\bibitem[\protect\citeauthoryear{Liao, Li, Dai, Chen, Qin, and Wang}{Liao
  et~al\mbox{.}}{2023a}]%
        {LiLiDaChQiWa23PageRank}
\bibfield{author}{\bibinfo{person}{Meihao Liao}, \bibinfo{person}{Rong-Hua Li},
  \bibinfo{person}{Qiangqiang Dai}, \bibinfo{person}{Hongyang Chen},
  \bibinfo{person}{Hongchao Qin}, {and} \bibinfo{person}{Guoren Wang}.}
  \bibinfo{year}{2023}\natexlab{a}.
\newblock \showarticletitle{Efficient Personalized PageRank Computation: The
  Power of Variance-Reduced Monte Carlo Approaches}.
\newblock \bibinfo{journal}{\emph{Proceedings of the ACM on Management of
  Data}} \bibinfo{volume}{1}, \bibinfo{number}{2} (\bibinfo{year}{2023}),
  \bibinfo{pages}{1--26}.
\newblock


\bibitem[\protect\citeauthoryear{Liao, Li, Dai, Chen, Qin, and Wang}{Liao
  et~al\mbox{.}}{2023b}]%
        {LiLiDaChQiWa23Resistance}
\bibfield{author}{\bibinfo{person}{Meihao Liao}, \bibinfo{person}{Rong-Hua Li},
  \bibinfo{person}{Qiangqiang Dai}, \bibinfo{person}{Hongyang Chen},
  \bibinfo{person}{Hongchao Qin}, {and} \bibinfo{person}{Guoren Wang}.}
  \bibinfo{year}{2023}\natexlab{b}.
\newblock \showarticletitle{Efficient Resistance Distance Computation: the
  Power of Landmark-based Approaches}.
\newblock \bibinfo{journal}{\emph{Proceedings of the ACM on Management of
  Data}} \bibinfo{volume}{1}, \bibinfo{number}{1} (\bibinfo{year}{2023}),
  \bibinfo{pages}{1--27}.
\newblock


\bibitem[\protect\citeauthoryear{Liao, Li, Dai, and Wang}{Liao
  et~al\mbox{.}}{2022}]%
        {LiLiDaWa22}
\bibfield{author}{\bibinfo{person}{Meihao Liao}, \bibinfo{person}{Rong-Hua Li},
  \bibinfo{person}{Qiangqiang Dai}, {and} \bibinfo{person}{Guoren Wang}.}
  \bibinfo{year}{2022}\natexlab{}.
\newblock \showarticletitle{Efficient Personalized PageRank Computation: A
  Spanning Forests Sampling Based Approach}. In
  \bibinfo{booktitle}{\emph{Proceedings of the 2022 International Conference on
  Management of Data}} (Philadelphia, PA, USA)
  \emph{(\bibinfo{series}{SIGMOD/PODS '22})}. \bibinfo{publisher}{Association
  for Computing Machinery}, \bibinfo{address}{New York, NY, USA},
  \bibinfo{pages}{2048–2061}.
\newblock


\bibitem[\protect\citeauthoryear{Marchal}{Marchal}{2000}]%
        {Ma00}
\bibfield{author}{\bibinfo{person}{Philippe Marchal}.}
  \bibinfo{year}{2000}\natexlab{}.
\newblock \showarticletitle{Loop-erased random walks, spanning trees and
  Hamiltonian cycles}.
\newblock \bibinfo{journal}{\emph{Electronic Communications in Probability}}
  \bibinfo{volume}{5} (\bibinfo{year}{2000}), \bibinfo{pages}{39--50}.
\newblock


\bibitem[\protect\citeauthoryear{Merris}{Merris}{1994}]%
        {Me94}
\bibfield{author}{\bibinfo{person}{Russell Merris}.}
  \bibinfo{year}{1994}\natexlab{}.
\newblock \showarticletitle{Laplacian matrices of graphs: a survey}.
\newblock \bibinfo{journal}{\emph{Linear Algebra Appl.}}  \bibinfo{volume}{197}
  (\bibinfo{year}{1994}), \bibinfo{pages}{143--176}.
\newblock


\bibitem[\protect\citeauthoryear{Newman}{Newman}{2010}]%
        {Ne10}
\bibfield{author}{\bibinfo{person}{Mark E.~J. Newman}.}
  \bibinfo{year}{2010}\natexlab{}.
\newblock \bibinfo{booktitle}{\emph{Networks: An Introduction}}.
\newblock \bibinfo{publisher}{Oxford University Press, Oxford, UK}.
\newblock


\bibitem[\protect\citeauthoryear{Pilavci, Amblard, Barthelme, and
  Tremblay}{Pilavci et~al\mbox{.}}{2020}]%
        {PiAmBaTr20}
\bibfield{author}{\bibinfo{person}{Yusuf~Y Pilavci},
  \bibinfo{person}{Pierre-Olivier Amblard}, \bibinfo{person}{Simon Barthelme},
  {and} \bibinfo{person}{Nicolas Tremblay}.} \bibinfo{year}{2020}\natexlab{}.
\newblock \showarticletitle{Smoothing graph signals via random spanning
  forests}. In \bibinfo{booktitle}{\emph{IEEE International Conference on
  Acoustics, Speech and Signal Processing}}. IEEE, \bibinfo{pages}{5630--5634}.
\newblock


\bibitem[\protect\citeauthoryear{Pilavc{\i}, Amblard, Barthelme, and
  Tremblay}{Pilavc{\i} et~al\mbox{.}}{2021}]%
        {PiAmBaTr21}
\bibfield{author}{\bibinfo{person}{Yusuf~Yi{\u{g}}it Pilavc{\i}},
  \bibinfo{person}{Pierre-Olivier Amblard}, \bibinfo{person}{Simon Barthelme},
  {and} \bibinfo{person}{Nicolas Tremblay}.} \bibinfo{year}{2021}\natexlab{}.
\newblock \showarticletitle{Graph {T}ikhonov regularization and interpolation
  via random spanning forests}.
\newblock \bibinfo{journal}{\emph{IEEE transactions on Signal and Information
  Processing over Networks}}  \bibinfo{volume}{7} (\bibinfo{year}{2021}),
  \bibinfo{pages}{359--374}.
\newblock


\bibitem[\protect\citeauthoryear{Pilavci, Amblard, Barthelme, and
  Tremblay}{Pilavci et~al\mbox{.}}{2022}]%
        {PiAmBaTr22trace}
\bibfield{author}{\bibinfo{person}{Yusuf~Yigit Pilavci},
  \bibinfo{person}{Pierre-Olivier Amblard}, \bibinfo{person}{Simon Barthelme},
  {and} \bibinfo{person}{Nicolas Tremblay}.} \bibinfo{year}{2022}\natexlab{}.
\newblock \showarticletitle{Variance reduction for inverse trace estimation via
  random spanning forests}.
\newblock \bibinfo{journal}{\emph{arXiv preprint arXiv:2206.07421}}
  (\bibinfo{year}{2022}).
\newblock


\bibitem[\protect\citeauthoryear{Pilavc{\i}, Amblard, Barthelm{\'e}, and
  Tremblay}{Pilavc{\i} et~al\mbox{.}}{2022}]%
        {PiAmBaTr22}
\bibfield{author}{\bibinfo{person}{Yusuf~Yigit Pilavc{\i}},
  \bibinfo{person}{Pierre-Olivier Amblard}, \bibinfo{person}{Simon
  Barthelm{\'e}}, {and} \bibinfo{person}{Nicolas Tremblay}.}
  \bibinfo{year}{2022}\natexlab{}.
\newblock \showarticletitle{Variance Reduction in Stochastic Methods for
  Large-Scale Regularized Least-Squares Problems}. In
  \bibinfo{booktitle}{\emph{2022 30th European Signal Processing Conference
  (EUSIPCO)}}. IEEE, \bibinfo{pages}{1771--1775}.
\newblock


\bibitem[\protect\citeauthoryear{Rossi, Frasca, and Fagnani}{Rossi
  et~al\mbox{.}}{2017}]%
        {RoFrFa17}
\bibfield{author}{\bibinfo{person}{Wilbert~Samuel Rossi},
  \bibinfo{person}{Paolo Frasca}, {and} \bibinfo{person}{Fabio Fagnani}.}
  \bibinfo{year}{2017}\natexlab{}.
\newblock \showarticletitle{Distributed estimation from relative and absolute
  measurements}.
\newblock \bibinfo{journal}{\emph{IEEE Trans. Automat. Control}}
  \bibinfo{volume}{62}, \bibinfo{number}{12} (\bibinfo{year}{2017}),
  \bibinfo{pages}{6385--6391}.
\newblock


\bibitem[\protect\citeauthoryear{Saad and Schultz}{Saad and Schultz}{1986}]%
        {SaSc86}
\bibfield{author}{\bibinfo{person}{Youcef Saad} {and} \bibinfo{person}{Martin~H
  Schultz}.} \bibinfo{year}{1986}\natexlab{}.
\newblock \showarticletitle{GMRES: A generalized minimal residual algorithm for
  solving nonsymmetric linear systems}.
\newblock \bibinfo{journal}{\emph{SIAM Journal on scientific and statistical
  computing}} \bibinfo{volume}{7}, \bibinfo{number}{3} (\bibinfo{year}{1986}),
  \bibinfo{pages}{856--869}.
\newblock


\bibitem[\protect\citeauthoryear{Skibski, Michalak, and Rahwan}{Skibski
  et~al\mbox{.}}{2018}]%
        {SkMiRa18}
\bibfield{author}{\bibinfo{person}{Oskar Skibski}, \bibinfo{person}{Tomasz~P
  Michalak}, {and} \bibinfo{person}{Talal Rahwan}.}
  \bibinfo{year}{2018}\natexlab{}.
\newblock \showarticletitle{Axiomatic Characterization of Game-Theoretic
  Centrality}.
\newblock \bibinfo{journal}{\emph{Journal of Artificial Intelligence Research}}
   \bibinfo{volume}{62} (\bibinfo{year}{2018}), \bibinfo{pages}{33--68}.
\newblock


\bibitem[\protect\citeauthoryear{Skibski, Rahwan, Michalak, and Yokoo}{Skibski
  et~al\mbox{.}}{2019}]%
        {SkRaMiYo19}
\bibfield{author}{\bibinfo{person}{Oskar Skibski}, \bibinfo{person}{Talal
  Rahwan}, \bibinfo{person}{Tomasz~P Michalak}, {and} \bibinfo{person}{Makoto
  Yokoo}.} \bibinfo{year}{2019}\natexlab{}.
\newblock \showarticletitle{Attachment Centrality: {M}easure for Connectivity
  in Networks}.
\newblock \bibinfo{journal}{\emph{Artificial Intelligence}}
  \bibinfo{volume}{274} (\bibinfo{year}{2019}), \bibinfo{pages}{151--179}.
\newblock


\bibitem[\protect\citeauthoryear{Sun and Zhang}{Sun and Zhang}{2023}]%
        {SuZh23}
\bibfield{author}{\bibinfo{person}{Haoxin Sun} {and} \bibinfo{person}{Zhongzhi
  Zhang}.} \bibinfo{year}{2023}\natexlab{}.
\newblock \showarticletitle{Opinion Optimization in Directed Social Networks}.
  In \bibinfo{booktitle}{\emph{Proceedings of the AAAI Conference on Artificial
  Intelligence}}, Vol.~\bibinfo{volume}{37}. \bibinfo{pages}{4623--4632}.
\newblock


\bibitem[\protect\citeauthoryear{van~der Grinten, Angriman, Predari, and
  Meyerhenke}{van~der Grinten et~al\mbox{.}}{2021}]%
        {GrAnPrMe21}
\bibfield{author}{\bibinfo{person}{Alexander van~der Grinten},
  \bibinfo{person}{Eugenio Angriman}, \bibinfo{person}{Maria Predari}, {and}
  \bibinfo{person}{Henning Meyerhenke}.} \bibinfo{year}{2021}\natexlab{}.
\newblock \showarticletitle{New Approximation Algorithms for Forest Closeness
  Centrality--for Individual Vertices and Vertex Groups}. In
  \bibinfo{booktitle}{\emph{Proceedings of the 2021 SIAM International
  Conference on Data Mining}}. SIAM, \bibinfo{pages}{136--144}.
\newblock


\bibitem[\protect\citeauthoryear{Wilson}{Wilson}{1996}]%
        {Wi96}
\bibfield{author}{\bibinfo{person}{David~Bruce Wilson}.}
  \bibinfo{year}{1996}\natexlab{}.
\newblock \showarticletitle{Generating random spanning trees more quickly than
  the cover time}. In \bibinfo{booktitle}{\emph{Proceedings of the
  Twenty-Eighth Annual ACM Symposium on Theory of Computing}}.
  \bibinfo{pages}{296–303}.
\newblock


\bibitem[\protect\citeauthoryear{Xu, Bao, and Zhang}{Xu et~al\mbox{.}}{2021}]%
        {XuBaZh21}
\bibfield{author}{\bibinfo{person}{Wanyue Xu}, \bibinfo{person}{Qi Bao}, {and}
  \bibinfo{person}{Zhongzhi Zhang}.} \bibinfo{year}{2021}\natexlab{}.
\newblock \showarticletitle{Fast evaluation for relevant quantities of opinion
  dynamics}. In \bibinfo{booktitle}{\emph{Proceedings of The Web Conference}}.
  ACM, \bibinfo{pages}{2037--2045}.
\newblock


\end{thebibliography}

\clearpage
\appendix
\section{APPENDIX}
In this section we provide  proofs of  lemmas and theorems in the article. 
\subsection{Chernoff Bound}
\begin{lemma}(Chernoff bound)\label{le-chernoff}
	Let $ x_i(1\leq i\leq l) $ be independent random variables satisfying $ |x_i- \mathbb{E}\{x_i\}|\leq M$ for all $ 1\leq i \leq l $. Let $ x = \frac{1}{l} \sum_{i=1}^l x_i $. Then we have
	\begin{equation}\label{key}
		\mathbb{P}\{|x-\mathbb{E}\{x\}| \leq \epsilon \}\geq 1-2\exp{\left(-\frac{l\epsilon^2}{2({\rm Var}\{x\}l+M\epsilon/3)}\right)}.
	\end{equation}
\end{lemma}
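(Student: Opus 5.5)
The plan is to derive the stated inequality from the classical Bernstein inequality for bounded, centered, independent summands, and to prove that inequality directly by the exponential-moment (Cramér–Chernoff) method. Set $y_i = x_i - \mathbb{E}\{x_i\}$, so that $\mathbb{E}\{y_i\}=0$ and $|y_i|\le M$. Write $S=\sum_{i=1}^l y_i$ and $V=\sum_{i=1}^l {\rm Var}\{x_i\}$. By independence, ${\rm Var}\{x\} = V/l^2$, i.e.\ $l\,{\rm Var}\{x\} = V/l$. The event $|x-\mathbb{E}\{x\}|>\epsilon$ is exactly the event $|S|> l\epsilon$. It therefore suffices to show
\[
\mathbb{P}\{|S|\ge t\}\le 2\exp\left(-\frac{t^2}{2(V+Mt/3)}\right),
\]
and then substitute $t=l\epsilon$. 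The exponent becomes $-\frac{l^2\epsilon^2}{2(V+Ml\epsilon/3)} = -\frac{l\epsilon^2}{2(V/l+M\epsilon/3)}$, which is the claimed expression.

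For the one-sided bound I will fix $\lambda\in(0,3/M)$ and use Markov's inequality on $e^{\lambda S}$, giving $\mathbb{P}\{S\ge t\}\le e^{-\lambda t}\prod_i \mathbb{E}\{e^{\lambda y_i}\}$ by independence. The key step is the moment generating function estimate for each summand. Expanding $e^{\lambda y_i}=1+\lambda y_i+\sum_{k\ge2}\lambda^k y_i^k/k!$ and using $\mathbb{E}\{y_i\}=0$ together with $|y_i^k|\le M^{k-2}y_i^2$, I get
\[
\mathbb{E}\{e^{\lambda y_i}\}\le 1+{\rm Var}\{x_i\}\sum_{k\ge2}\frac{\lambda^k M^{k-2}}{k!}.
\]
Bounding $k!\ge 2\cdot 3^{k-2}$ turns the sum into a geometric series, so
\[
\mathbb{E}\{e^{\lambda y_i}\}\le 1+\frac{\lambda^2{\rm Var}\{x_i\}}{2(1-\lambda M/3)}\le \exp\left(\frac{\lambda^2{\rm Var}\{x_i\}}{2(1-\lambda M/3)}\right).
\]
Multiplying over $i$ yields $\mathbb{P}\{S\ge t\}\le \exp\left(-\lambda t+\frac{\lambda^2 V}{2(1-\lambda M/3)}\right)$.

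It remains to choose $\lambda$. Taking $\lambda = t/(V+Mt/3)$, which indeed lies in $(0,3/M)$, gives $1-\lambda M/3 = V/(V+Mt/3)$. The exponent then simplifies exactly to $-\frac{t^2}{2(V+Mt/3)}$. Applying the same argument to $-y_i$, which satisfy the same hypotheses, bounds $\mathbb{P}\{S\le -t\}$ identically. A union bound supplies the factor $2$, and taking complements gives the stated lower bound on $\mathbb{P}\{|x-\mathbb{E}\{x\}|\le\epsilon\}$.

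The only delicate point is the moment generating function estimate: the factorial bound $k!\ge 2\cdot 3^{k-2}$ and the constraint $\lambda<3/M$ must be handled carefully so that the constant $M\epsilon/3$ comes out exactly. Everything else is routine algebra.

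Alternatively, one can simply cite this as the Bernstein form of the Chernoff bound in~\cite{ChLu06} and perform only the substitution $t=l\epsilon$.
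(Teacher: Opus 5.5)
Your proof is correct. It differs from the paper only in that the paper proves nothing here. The lemma is simply imported as the Bernstein-type form of the Chernoff bound from Chung and Lu~\cite{ChLu06}, namely $\mathbb{P}\{|X-\mathbb{E}\{X\}|\ge t\}\le 2\exp(-t^2/(2({\rm Var}\{X\}+Mt/3)))$ for $X=\sum_i x_i$. So the paper's route is exactly your closing remark: cite, then substitute $t=l\epsilon$.

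Your Cram\'er--Chernoff argument adds two things. It is self-contained, and it checks the normalization. It makes explicit that the quantity ${\rm Var}\{x\}\,l$ in the exponent equals $V/l$. In the i.i.d.\ setting where the paper actually uses the lemma (Theorems~\ref{th-rf} and~\ref{th-Fast}), this is the single-sample variance. It also shows where the constant $M\epsilon/3$ comes from: the bound $k!\ge 2\cdot 3^{k-2}$, which is the standard mechanism behind the cited inequality.

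One small point to tidy. Your choice $\lambda=t/(V+Mt/3)$ lies strictly below $3/M$ only when $V>0$. If $V=0$ (which includes $M=0$), every $y_i=0$ almost surely, so $\mathbb{P}\{|S|\ge t\}=0$ for $t>0$. Dispatch this degenerate case separately before optimizing over $\lambda$.
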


\subsection{Proof of Theorem~\ref{th-reciprocal}}
\begin{proof}
Given that $\sum_{j=1}^n\omega_{ji}=1$ and $\omega_{ij}= |\calF_{ij}|/|\calF|$, the following can be deduced:
\begin{align}\label{rhonew}
\frac{1}{\omega_{ii}} &= \frac{\sum_{j=1}^n\omega_{ji}}{\omega_{ii}}  = \frac{1}{|\calF_{ii}|}\sum_{j=1}^n  |\calF_{ji}|
= \frac{1}{|\calF_{ii}|}\sum\limits_{j=1}^{n}\sum\limits_{\phi\in \calF_{ii}}\mathbb{I}_{\{r_{\phi}(j)=i\}}\notag\\ 
& = \frac{1}{|\calF_{ii}|}\sum\limits_{\phi\in \calF_{ii}}\sum\limits_{j=1}^{n}\mathbb{I}_{\{r_{\phi}(j)=i\}}
= \frac{1}{|\calF_{ii}|}\sum\limits_{\phi\in \calF_{ii}}|N(\phi,i)|,
\end{align}
% \end{equation}

This implies that $\frac{1}{\omega_{ii}}$ represents the average number of nodes within the converging trees that are rooted at node $i$ across all $\phi\in \calF_{ii}$.
\end{proof}

\subsection{Proof of Lemma~\ref{le-rf}}
 \begin{proof}
 	Wilson showed that the expected running time of generating a uniform spanning tree of a connected digraph $\calG$ rooted at node $u$ is equal to a weighted average of commute times between the root and the other nodes~\cite{Wi96}. Marchal rewrote this average of commute times in terms of graph matrices in Proposition 1 in~\cite{Ma00}. In~\cite{SuZh23}, the author further analyzed the expected time complexity for the extension of Wilson's algorithm in  $ \calG'$ is $O(n)$. Thus, the expected time complexity of Algorithm 1 is $O(ln)$.
 \end{proof}

\subsection{Proof of Theorem~\ref{th-rf}}

\begin{proof}
    Recall that for a given spanning converging forest  $\phi \in \calF$  and a node $i \in V$, the random variable $\widehat{\omega}_{ii}(\phi)$ is defined as $\mathbb{I}_{\{i \in \calR(\phi)\}}$. Given that the possible values of $\widehat{\omega}_{ii}(\phi)$ are either 0 or 1, it follows that $|\widehat{\omega}_{ii}(\phi) - \omega_{ii}| \leq 1$.

    In   Algorithm~\ref{alg-rf}, we generate a set of $l$ spanning forests, represented as $\phi_1, \cdots, \phi_l$. The output $\widehat{\boldsymbol{\omega}}[i]$ of Algorithm~\ref{alg-rf} is  $\widehat{\boldsymbol{\omega}}[i] = \frac{1}{l} \sum_{j=1}^l \widehat{\omega}_{ii}(\phi_j)$.

Then, we can compute the variance of $\widehat{\boldsymbol{\omega}}[i]$ as:
\[ {\rm Var}\{\frac{1}{l} \sum_{j=1}^l \widehat{\omega}_{ii}(\phi_j)\} = \frac{1}{l} {\rm Var}\{\widehat{\omega}_{ii}(\phi)\} = \frac{1}{l}(\omega_{ii} - \omega_{ii}^2). \]

To obtain a relative error bound, it is necessary to satisfy the inequality:
 $$\mathbb{P}\{|\widehat{\boldsymbol{\omega}}[i]-\omega_{ii}| \geq \epsilon \omega_{ii} \}\leq \delta.$$

By invoking the Chernoff bound in Lemma~\ref{le-chernoff}, and designating $x_j = \widehat{\omega}_{ii}(\phi_j)$ for $1 \leq j \leq l$ and $x = \widehat{\boldsymbol{\omega}}[i]$, in order to prove~\eqref{eq-omegaii} we only need to show that the following inequality holds:

$$2\exp{\left(-\frac{l\epsilon^2\omega_{ii}^2}{2({\rm Var}\{\widehat{\omega}_{ii}\}+M\epsilon\omega_{ii}/3)}\right)} \leq \delta,$$
which leads to:
\begin{equation}\label{ltomeet}
l \geq \log(\frac{2}{\delta})\left(\frac{2{\rm Var}\{\widehat{\omega}_{ii}\}}{\epsilon^2\omega_{ii}^2}+\frac{2M}{3\epsilon\omega_{ii}}\right).
\end{equation}
Since $|\widehat{\omega}_{ii} - \omega_{ii}| \leq 1$, we can set $M=1$. Considering   ${\rm Var}\{ \widehat{\omega}_{ii}(\phi_j)\} = \omega_{ii} - \omega_{ii}^2$, the inequality~\eqref{ltomeet} reduces to:
$$l \geq \log(\frac{2}{\delta})\left(\frac{2}{\epsilon^2\omega_{ii}}
+\frac{2}{3\epsilon\omega_{ii}}-\frac{2}{\epsilon^2}\right).
$$
Thus, selecting $ l =\left \lceil \frac{6+2\epsilon}{3\sigma\epsilon^2}\log{\frac{2}{\delta} }  \right \rceil$ ensures that the required inequality always holds. This completes the proof.
\end{proof}

\subsection{Proof of Lemma~\ref{le-var-omega}}

\begin{proof}
    According to Lemma~\ref{le-rf}, $\boldsymbol{\rho}^i$ satisfying $\boldsymbol{\rho}^i = (\II+\DD)^{-1}\ee_i + (\II+\DD)^{-1}\AA\boldsymbol{\rho}_i$. Thus, $\ee_i^\top\boldsymbol{\rho}^i = \ee_i^\top(\II+\DD)^{-1}\ee_i + \ee_i^\top(\II+\DD)^{-1}\AA\boldsymbol{\rho}^i$, implying ${\omega}_{ii} =  \frac{1}{1+d_i}(1 + \sum_{j\in N(i)}{\omega}_{ji})$. Since $\widetilde{\omega}_{ii}(\phi) 
    = \frac{1}{1+d_i}(1 + \sum_{j\in N(i)}\widehat{\omega}_{ji}(\phi))$, and $\widehat{\omega}_{ji}(\phi)$ is an unbiased estimator of $\omega_{ji}$, $\widetilde{\omega}_{ii}$ is an unbiased estimator of $\omega_{ii}$.

 Next we  derive the variance of  estimator $\widetilde{\omega}_{ii}$. For a spanning converging forest $\phi \in \calF$ and two nodes $i,j\in V$ with $i\neq j$, we have $\widehat{\omega}_{ij}(\phi) = \mathbb{I}_{\{r_{\phi}(i)=j\}} = 1$ or $0$, leading to $\mathbb{E}\{\widehat{\omega}_{ij}^2\} = \mathbb{E}\{\widehat{\omega}_{ij}\} = {\omega}_{ij}$. Moreover, for distinct nodes $i,j,k\in V$, we have $\mathbb{E}\{\widehat{\omega}_{ji}(\phi)\widehat{\omega}_{ki}(\phi)\} = q_{jk}^{(i)}$. Thus, we can derive  
\begin{equation}
    \begin{aligned}
    &\quad {\rm Var}\{\widetilde{\omega}_{ii}(\phi)\}= \mathbb{E}\{(\widetilde{\omega}_{ii})^2\} - (\mathbb{E}\{\widetilde{\omega}_{ii}\})^2 \\&= \frac{1}{(1+d_i)^2}\mathbb{E}\{( 1+\sum_{j\in N(i)}\widehat{\omega}_{ji})^2 \} - \omega_{ii}^2 
     \\ &= \frac{1}{(1+d_i)^2}\mathbb{E}\{1+2\sum_{i\in N(j)}\widehat{\omega}_{ji} + (\sum_{i\in N(j)}\widehat{\omega}_{ji})^2 \} - \omega_{ii}^2 
     \\ &= \frac{1+3\sum_{i\in N(j)}{\omega}_{ji}}{(1+d_i)^2}+ \frac{2\sum_{j,k\in N(i)}q^{(i)}_{jk}}{(1+d_i)^2}-\omega_{ii}^2 \\&=  \frac{1+3((1+d_i)\omega_{ii}-1)}{(1+d_i)^2}+ \frac{2\sum_{j,k\in N(i)}q^{(i)}_{jk}}{(1+d_i)^2}- \omega_{ii}^2 \\ &=\frac{3\omega_{ii}}{1+d_i} - \frac{2}{(1+d_i)^2} + \frac{2\sum_{j,k\in N(i)}q^{(i)}_{jk}}{(1+d_i)^2}-\omega_{ii}^2.
    \end{aligned}
\end{equation}

For a spanning converging forest $\phi\in \calF$, and nodes $i,j\in V$ with $i\neq j$, we consider the scenario where both  $\widehat{\omega}_{ji}(\phi)$ and $\widehat{\omega}_{ki}(\phi)$  are equal to 1. This implies that node $i$  acts as a root node in  $\phi$, leading to  $\widehat{\omega}_{ii} = 1$ . From this observation, we can infer that the expected value of the product $ \widehat{\omega}_{ji}(\phi)\widehat{\omega}_{ki}(\phi) $ is bounded by the expected value of $ \widehat{\omega}_{ii}(\phi) $. Formally, this can be expressed as $ q_{jk}^{(i)} \leq \omega_{ii} $. Building upon this foundation, we can further extend our analysis to compare the variances of the two estimators $ \widehat{\omega}_{ii} $ and $ \widetilde{\omega}_{ii}^{(0)} $. The mathematical derivation is as follows:

\begin{equation}
    \begin{aligned}
        {\rm Var}\{\widehat{\omega}_{ii}\} - {\rm Var}\{\widetilde{\omega}_{ii}\} & = \omega_{ii} - \frac{3\omega_{ii}}{1+d_i} + \frac{2}{(1+d_i)^2} - \frac{2\sum_{j,k\in N(i)}q^{(i)}_{jk}}{(1+d_i)^2}\\ &\geq \omega_{ii} - \frac{3\omega_{ii}}{1+d_i} + \frac{2}{(1+d_i)^2} - \frac{d_{i}(d_i-1)\omega_{ii}}{(1+d_i)^2} \\ &  =\frac{2(1-\omega_{ii})}{(1+d_i)^2}\geq 0.
    \end{aligned}
\end{equation}
This derivation solidifies our understanding and confirms the reduced variance of the estimator  $\widetilde{\omega}_{ii}$  compared to $\widehat{\omega}_{ii}$, thus completing our proof.
\end{proof}
\subsection{Proof of Lemma~\ref{le-var-omega-new}}

\begin{proof}
Since $\boldsymbol{\gamma}^i$  is the equilibrium vector of the iteration equation~\eqref{FJ2}, it obeys relation ${\boldsymbol{\gamma}^i}^\top = \ee_i^\top(\II+\DD)^{-1} +{\boldsymbol{\gamma}^i}^\top\AA(\II+\DD)^{-1}$. Thus, ${\boldsymbol{\gamma}^i}^\top\ee_i = \ee_i^\top(\II+\DD)^{-1}\ee_i +{\boldsymbol{\gamma}^i}^\top\AA(\II+\DD)^{-1}\ee_i$, meaning ${\omega}_{ii} =  \frac{1}{1+d_i}(1 + \sum_{j: i\in N(j)}{\omega}_{ij})$. Since $\widebar{\omega}_{ii}(\phi) = \frac{1}{1+d_i} + \frac{1}{1+d_i}\sum_{j:i\in N(j)}\widehat{\omega}_{ij}(\phi) $ and $\widehat{\omega}_{ij}(\phi)$ is an unbiased estimator of $\omega_{ij}$, $\widebar{\omega}_{ii}$ is an unbiased estimator of $\omega_{ii}$.

% Starting with Equation \eqref{Omega0new}, it's straightforward to deduce that $\mathbb{E}\{  \widebar{\omega}_{ii}(\phi)\} = \omega_{ii}$. This establishes that $\widebar{\omega}_{ii}(\phi)$ acts as an unbiased estimator for $\omega_{ii}$.

To determine the variance of the random variable $\widebar{\omega}_{ii}(\phi)$, we follow a similar approach as used for $\widetilde{\omega}_{ii}(\phi)$ in Lemma~\ref{le-var-omega}. It's crucial to note that for distinct nodes $i,j,k\in V$, the relationship $\widehat{\omega}_{ij}(\phi)\widehat{\omega}_{ik}(\phi) = 0$ consistently holds for any spanning converging forest $\phi\in\calF$. This ensures that the cross-product term is eliminated from our calculations. Given that the cross-product term is non-negative, it's evident that $\widebar{\omega}(\phi)$ exhibits a reduced variance in comparison to the previous two random variables.
\end{proof}

\subsection{Proof of Lemma~\ref{le-varbound}}
\begin{proof}
    According to Lemma~\ref{le-var-omega-new}, we derive 
    \begin{equation}
    \begin{aligned}
              &\left|\frac{{\rm Var}\{\widebar{\omega}_{ii}\}}{\omega_{ii}^2}\right| = \frac{3}{(1+d_i)\omega_{ii}} - \frac{2}{(1+d_i)^2\omega_{ii}^2} -1 \\&= -\frac{2}{(1+d_i)^2}(\frac{1}{\omega_{ii}}-\frac{3(1+d_i)}{4})^2+\frac{1}{8}  \leq \frac{1}{8}
    \end{aligned}
    \end{equation} 
The equation holds when $\frac{1}{\omega_{ii}}=\frac{3(1+d_i)}{4}$,  implying $\omega_{ii} = \frac{4}{3(1+d_i)}$, which finishes the proof.
\end{proof}
\subsection{Proof of Theorem~\ref{th-Fast}}

\begin{proof}
For a given spanning converging forest  $\phi \in \calF$ and a node $i \in V$,  the random variable $\widebar{\omega}_{ii}(\phi)$ is defined as $      \widebar{\omega}_{ii}(\phi) = \frac{1}{1+d_i} + \frac{1}{1+d_i}\sum_{j:i\in N(j)}\widehat{\omega}_{ij}(\phi)$. Given that node $i$ has only one root node in $\phi$, the possible values of $\widebar{\omega}_{ii}(\phi)$ are either $\frac{1}{1+d_i}$ or $\frac{2}{1+d_i}$. Since $\frac{1}{1+d_i}\leq \omega_{ii} \leq \frac{2}{2+d_i}$, it follows that $|\widebar{\omega}_{ii} - \omega_{ii}| \leq \frac{1}{1+d_i}$.

In  Algorithm~\ref{alg-rfwithvar+}, we generate $l$ spanning forests  $\phi_1, \phi_2,\cdots, \phi_l$. Its output $\widebar{\boldsymbol{\omega}}[i]$ can expressed as:
 $\widebar{\boldsymbol{\omega}}[i] = \frac{1}{l} \sum_{j=1}^l \widebar{\omega}_{ii}(\phi_j)$.

% The variance of  $\widebar{\boldsymbol{\omega}}[i]$ can be calculated as:
% $$ {\rm Var}\{\frac{1}{l} \sum_{j=1}^l \widebar{\omega}_{ii}(\phi_j)\} = \frac{1}{l} {\rm Var}\{\widehat{\omega}_{ii}(\phi)\} = \frac{1}{l}(\frac{3\omega_{ii}}{1+d_i} - \frac{2}{(1+d_i)^2} -\omega_{ii}^2). $$

%  To obtain a relative error bound, it is necessary to satisfy the inequality:
%  $$\mathbb{P}\{|\widebar{\boldsymbol{\omega}}[i]-\omega_i| \geq \epsilon \omega_i \}\leq \delta.$$ 

%Now we can apply the Chernoff bound  in Lemma~\ref{le-chernoff}. 
Setting $x_j = \widebar{\omega}_{ii}(\phi_j)$  for $1 \leq j \leq l$,  $M = \frac{1}{1+d_i}$, and  $x = \widebar{\boldsymbol{\omega}}[i]$, and using a similar analysis as in the proof of Theorem~\ref{th-rf}, the number of $l$ needs to satisfy the following inequality:

% , we derive:
%  $$2\exp{\left(-\frac{l\epsilon^2\omega_{ii}^2}{2({\rm Var}\{\widebar{\omega}_{ii}\}+M\epsilon\omega_{ii}/3)}\right)} \leq \delta,$$ 

 \begin{equation}\label{l}
      l \geq \log(\frac{2}{\delta})\left(\frac{2{\rm Var}\{\widebar{\omega}_{ii}\}}{\epsilon^2\omega_{ii}^2}+\frac{2}{3(1+d_i)\epsilon\omega_{ii}}\right). 
 \end{equation}
Making  use of Lemma~\ref{le-varbound} we have $\frac{{\rm Var}\{\widebar{\omega}_{ii}\}}{\omega_{ii}^2}\leq \frac{1}{8}$ and $ (1+d_i)\omega_{ii}\leq 1$.
Thus, selecting $l =\left \lceil  (\frac{2}{3\epsilon}+\frac{1}{4\epsilon^2})\log(\frac{2}{\delta})  \right \rceil$ ensures that inequality~\eqref{l} always holds. This completes the proof.
\end{proof}

\end{document}